\documentclass[11pt]{article}

\usepackage[utf8]{inputenc}
\usepackage[a4paper, margin=2cm, top=2cm, bottom =3cm]{geometry}

\usepackage{mathtools, nccmath,amsmath,amssymb,amsfonts,amstext, dsfont, color}
\usepackage{amsthm}

\usepackage{booktabs}

\usepackage{subcaption}
\usepackage{mathtools, stmaryrd}
\usepackage{xparse} 
\usepackage{mathtools, stmaryrd}
\usepackage{bm}
\usepackage{algorithm}
\usepackage{algorithmicx}
\usepackage{algpseudocode}
\usepackage{nicefrac}

\bibliographystyle{plainurl}

\usepackage{hyperref}


\newtheorem{lemma}{Lemma}[section]
\newtheorem{proposition}{Proposition}[section]
\newtheorem{theorem}{Theorem}[section]

\newtheorem{assumption}{Assumption}[section]
\newtheorem{conjecture}{Conjecture}[section]

\newcommand{\Proba}[1]{\mathrm{Pr}\left(#1\right)}
\newcommand{\ProbaS}[2]{\mathrm{Pr}\left(#1\given #2\right)}
\newcommand{\E}[1]{\mathbb{E}\left[#1\right]}

\newcommand\givenbase[1][]{\:#1\lvert\:}

\newcommand{\CMS}{\texttt{CMS}} 
\newcommand{\CMSCU}{\texttt{CMS-CU}}
\newcommand{\UBCU}{\texttt{CU-UB}}
\newcommand{\LBCU}{\texttt{CU-LB}}

\newcommand{\CBFCU}[1]{\mathbf{Y}(#1)} 
\newcommand{\CBFCUElement}[2]{Y_{#1}(#2)}

\newcommand{\Excess}{\boldsymbol{\Delta}}

\newcommand{\LBCBFCU}[2]{\mathbf{Y}^{\text{lb},#1}(#2)}
\newcommand{\LBCBFCUElement}[3]{ Y^{\text{lb},#1}_{#2}(#3)}
\newcommand{\LBErrorstar}[2]{e^{*,\text{lb}, #1}(#2)}
\newcommand{\LBGap}[1]{G^{\text{lb},#1}}
\newcommand{\LBMatrixTrans}[1]{\mathbf{P}^{\text{lb},#1}}
\newcommand{\LBProbaTrans}[3]{p^{\text{lb},#1}_{\mathbf{#2},\mathbf{#3}}}
\newcommand{\LBMatrixError}[1]{\mathbf{B}^{\text{lb},#1}}
\newcommand{\LBProbaError}[3]{\beta^{\text{lb},#1}_{\mathbf{#2},\mathbf{#3}}}
\newcommand{\LBOccupancy}[1]{\mathbf{\boldsymbol{\pi}}^{\text{lb},#1}}
\newcommand{\LBOccupancyEle}[2]{\pi^{\text{lb},#1}_{\mathbf{#2}}}
\newcommand{\LBExcess}[1]{\boldsymbol{\Delta}^{\text{lb},#1}}
\newcommand{\LBExcessEle}[2]{\Delta^{\text{lb},#1}_{#2}}
\newcommand{\LBVMin}[1]{V^{\text{lb},#1}}
\newcommand{\LBNumberVMin}[1]{C^{\text{lb},#1}}
\newcommand{\LBGamma}{\boldsymbol{\Gamma}^{\text{lb},g}}
\newcommand{\LBGammaEle}{\Gamma^{\text{lb},g}_{l}}

\newcommand{\UBCBFCU}[2]{\mathbf{Y}^{\text{ub},#1}(#2)}
\newcommand{\UBCBFCUElement}[3]{Y^{\text{ub},#1}_{#2}(#3)}
\newcommand{\UBErrorstar}[2]{e^{*,\text{ub}, #1}(#2)}

\newcommand{\UBMatrixTrans}[1]{\mathbf{P}^{\text{ub},#1}}

\newcommand{\UBMatrixError}[1]{\mathbf{B}^{\text{ub},#1}}

\newcommand{\UBExcess}[1]{\boldsymbol{\Delta}^{\text{ub},#1}}

\newcommand{\UBVMin}[1]{V^{\text{ub},#1}}
\newcommand{\UBNumberVMin}[1]{C^{\text{ub},#1}}
\newcommand{\UBGamma}{\boldsymbol{\Gamma}^{\text{ub},g}}
\newcommand{\UBGammaEle}{\Gamma^{\text{ub},g}_{l}}

\algblockdefx[CustomWhile]{While}{EndWhile}[1]{\textbf{While} #1 \textbf{do}}{\textbf{End While}}

\newcommand{\aIntInterval}[1]{\llbracket #1 \rrbracket}
\newcommand{\abIntInterval}[2]{\llbracket #1,  #2 \rrbracket}
\let\given\givenbase

\DeclarePairedDelimiterX\Basics[1](){\let\given\sgiven #1}



\DeclareMathOperator*{\argmin}{arg\,min}

\title{Count-Min Sketch with Conservative Updates: Worst-Case Analysis}

\author{Younes Ben Mazziane \\
        Centre Inria d'Université Côte d'Azur, France \\
        \texttt{younesbenmazziane@gmail.com}
        \and
        Othmane Marfoq \\
        Meta, United States \\
        \texttt{omarfoq@meta.com}
}

\date{}

\begin{document}
\maketitle

\begin{abstract}

Count-Min Sketch with Conservative Updates (\texttt{CMS-CU}) is a memory-efficient hash-based data structure used to estimate the occurrences of items within a data stream. \texttt{CMS-CU} stores~$m$ counters and employs~$d$ hash functions to map items to these counters. We first argue that the estimation error in \texttt{CMS-CU} is maximal when each item appears at most once in the stream. Next, we study \texttt{CMS-CU} in this setting. Precisely,  
\begin{enumerate}

    \item In the case where~$d=m-1$, we prove that the average estimation error and the average counter rate converge almost surely to~$\frac{1}{2}$, contrasting with the vanilla Count-Min Sketch, where the average counter rate is equal to~$\frac{m-1}{m}$. 

    \item For any given~$m$ and~$d$, we prove novel lower and upper bounds on the average estimation error, incorporating a positive integer parameter~$g$. Larger values of this parameter improve the accuracy of the bounds. Moreover, the computation of each bound involves examining an ergodic Markov process with a state space of size~$\binom{m+g-d}{g}$ and a sparse transition probabilities matrix containing~$\mathcal{O}(m\binom{m+g-d}{g})$ non-zero entries. 
    
    \item For~$d=m-1$, $g=1$, and as $m\to \infty$, we show that the lower and upper bounds coincide. In general, our bounds exhibit high accuracy for small values of $g$, as shown by numerical computation. For example, for $m=50$, $d=4$, and $g=5$, the difference between the lower and upper bounds is smaller than~$10^{-4}$. 

\end{enumerate}




\end{abstract}

\section{Introduction}
\label{sec:introduction}

Counting occurrences and determining the most prevalent items within a data stream represents a fundamental challenge~\cite{Basat2021Salsa, CHARIKAR2004CountSketch, cormode2005improved}. Although the concept of counting may appear straightforward initially---simply incrementing a counter upon encountering an item---the reality is far more complex, especially when confronted with the influx of real-time data streams characteristic of many practical applications. As data volumes surge, maintaining a dedicated counter for each potential item becomes unfeasible~\cite{nelson2012sketching}.



Given the impracticality of exact counting, many applications turn to approximate counting algorithms,
to handle massive data streams efficiently, sacrificing perfect accuracy for scalability. \emph{Count-Min Sketch} ($\CMS$) is a standout among these algorithms and is widely adopted for frequency estimation due to its simplicity and effectiveness~\cite{NLPSketchesApp2012,AnomalyDetectionCMS2021,CMSApplicationEngeneiring}. Indeed, its implementation in the popular data processing library DataSketches underscores its importance in the field~\cite{cormode2020smallBook}. 
Moreover, $\CMS$ is a foundational building block of many approximate counting algorithms that augment its capabilities with additional techniques~\cite{DiamondSketch2019, HeavyKeeper2019,hsu2019learningCM}.


$\CMS$~\cite{cormode2005improved} stores a matrix of~$m$ counters organized in~$d$ rows, utilizing $d$ hash functions to assign items to a counter within each row.
In the vanilla~$\CMS$ algorithm, updates involve incrementing the~$d$ counters corresponding to the hashed values of each input item from the stream. When queried for an item's count, the sketch performs lookups by hashing elements and retrieving the minimum value among the counters.
Due to potential collisions, $\CMS$ overestimates items' frequencies.

When increments are positive, a simple modification to~$\CMS$ leads to better accuracy in frequency estimation. This modification, known as \textit{conservative update}~\cite{ConservativeUpdate2002}, avoids unnecessary increments by first comparing stored counts for an item and then only incrementing the smallest ones. $\CMS$ with conservative updates ($\CMSCU$) finds practical applications in caching~\cite{TinyLFU2017}, natural language processing~\cite{NLPSketchesApp2012}, and bioinformatics~\cite{CMSCUApplicationBiology2022}.



The tradeoff between memory cost and accuracy is well-understood in the vanilla $\CMS$ across various data stream assumptions \cite{cormode2005improved, CormodeZipfErrorCMS2005}, quantified by probabilistic bounds on the estimation error. Even though the conservative update constitutes a minor tweak to~$\CMS$, previous studies fall short of fully grasping its advantages. Analytical challenges arise from the fact that, under the conservative update rule, the update of a counter selected by a hash function is contingent upon the status of the remaining chosen counters; essentially, counters do not increment independently but rather in an interrelated manner. We provide an overview of previous findings concerning $\CMSCU$ in Section~\ref{ss:previousResults} before presenting our contributions in Section~\ref{ss:Contributions}.


\subsection{Previous results}
\label{ss:previousResults}

Few works~\cite{bianchi2012modelingCMSCU,GILBloomCMSCU,MAZZIANE2022CMSCU,fusy2023phase,FusyExpCMSCU2023} explored the advantages of the conservative variant of $\CMS$. Among these, \cite{GILBloomCMSCU, MAZZIANE2022CMSCU,fusy2023phase,FusyExpCMSCU2023} assume that the data stream is a stochastic stationary process and propose probabilistic bounds on the error in terms of items' probabilities of appearing. However, the proposed bounds do not capture the benefits of conservative updates for infrequent items. Quantifying the error for such items is crucial, as empirical findings suggest that they experience larger errors compared to items with higher probabilities~\cite{bianchi2012modelingCMSCU,MAZZIANE2022CMSCU,FusyExpCMSCU2023}.

In contrast to~\cite{GILBloomCMSCU,MAZZIANE2022CMSCU,fusy2023phase}, Bianchi et al.~\cite{bianchi2012modelingCMSCU} examine $\CMSCU$ within a particular framework, shown through experimentation to align with a worst-case error scenario. In this setup, at each time step, $d$ distinct counters are selected uniformly at random, rather than being chosen based on the hash values of items in the stream. Bianchi et al.~\cite{bianchi2012modelingCMSCU} approximate the dynamics of this instance of~$\CMSCU$ via a \textit{fluid limit}. However, this approximation lacks theoretical justification and requires recursively solving a system of ordinary differential equations. Further details about previous studies are provided in Section~\ref{sec:related}.



\subsection{Contributions}
\label{ss:Contributions}

Among previous studies of~$\CMSCU$, Bianchi et al.~\cite{bianchi2012modelingCMSCU} are the only ones who attempted to estimate the worst-case error. Both their work and ours examine the scenario where, at each step, $d$ distinct counters are selected uniformly at random, which we call $\CMSCU$ with \textit{uniform counters selection}.

Bianchi et al.~\cite{bianchi2012modelingCMSCU} motivated that $\CMSCU$ with uniform counters selection represents a worst-case error configuration through simulations. In contrast, our work demonstrates that this instance of~$\CMSCU$ is equivalent to a stream where each item appears at most once, assuming the use of ideal hash functions. We then provide a rationale for considering this configuration as a worst-case error regime.



Driven by the understanding that the uniform counters selection setting constitutes a worst-case error scenario, we tackle the problem of computing the average estimation error within this specific regime. This problem is challenging due to the fast-growing state space of the counters' values. We overcome the state space difficulty by resorting to an alternate representation. Our approach enables us to establish, particularly when $d=m-1$, two key results. First, the average estimation error and the average counter rate converge to~$\nicefrac{1}{2}$ almost surely. Second, the \textit{counters gap}, i.e., the difference between the values of the maximal and minimal counter, is smaller than~$1$ with high probability. These results are formally presented in Theorem~\ref{th:main-particular}.

The computation of the exact estimation error becomes more challenging when dealing with arbitrary values of~$m$ and~$d$. In response, we establish novel lower and upper bounds, parametrized by a positive integer~$g$, on the average estimation error of~$\CMSCU$ with uniform counters selection. These bounds become tighter for larger values of~$g$. Moreover, they correspond to the average estimation errors in two variants of~$\CMSCU$ ($\LBCU$ in Alg.~\ref{alg:LB-CBF-CU} and $\UBCU$ in Alg.~\ref{alg:UB-CBF-CU}), where the \textit{counters gap} is at most~$g$. The limited gap property permits to characterize both the asymptotic and finite-time properties of the error by studying the dynamics of Markov chains with a state space of size $\binom{m+g-d}{m}$ and a sparse transition probabilities matrix with $\mathcal{O}\left(m\binom{m+g-d}{m}\right)$ non-zero entries. Theorem~\ref{th:main} details the characteristics of our bounds, and Lemma~\ref{lem:lInftyUinfty} shows that the lower and upper bounds coincide, even for $g=1$ when $d=m-1$ and as $m\to \infty$. Furthermore, Table~\ref{tab:numericalResults} highlights that the accuracy of our bounds with small values of~$g$ extends beyond the particular regime where~$d=m-1$.

The structure of the paper unfolds as follows: Section~\ref{sec:problem} introduces the notation and elucidates the dynamics of $\CMSCU$. More details about previous work are provided in Section~\ref{sec:related}. Section~\ref{sec:motivation} discusses the assumption of uniform counters selection. Our findings for the instances where $d=m-1$ are presented in Section~\ref{ss:ExactComputation}. We lay out our lower and upper bounds and illustrate their accuracy in Section~\ref{ss:LowerAndUpperBounds}. We conclude the paper in Section~\ref{sec:conclusion}. 





\section{Notation and Problem Formulation}
\label{sec:problem}
Consider a dynamically changing vector~$\bm{n}(t)= (n_i(t))_{i\in \mathcal{I}}$, where~$\mathcal{I}$ is a countable set. This vector~$\mathbf{n}$ undergoes updates based on a stream of $T$ items in $\mathcal{I}$, denoted as $\bm{r}=(r_{t})_{t\in \aIntInterval{T}}$, where $\aIntInterval{k}$ represents the set $\{1, \ldots, k\}$ for any positive integer~$k$. At each time step $t$, the value~$n_{r_t}$ is incremented by~$1$. Therefore, the vector~$\bm{n}(t)$ counts items' number of appearances in the stream $\bm{r}$. 

To estimate counts, we focus on a variant of Count-Min Sketch with Conservative Updates ($\CMSCU$) where hash functions map all items to the same array, known as \textit{Spectral Bloom Filter}~\cite{SpectralBloomFilters2003}. Although the distinction between the two data structures is marginal, $\CMSCU$ is more commonly referenced in the literature; hence, we continue to employ the term $\CMSCU$.


$\CMSCU$ dynamically updates an array of~$m$ counters and employs~$d$ hash functions $h_{l}:\mathcal{I}\mapsto \aIntInterval{m},~l\in\aIntInterval{d}$. These hash functions map each item~$i$ to the counters in the set~$\{h_l(i),\; l\in \aIntInterval{d}\}$, denoted as $\mathbf{h}(i)$. Let~$\bm{Y}(t)=(Y_{c}(t))_{c\in \aIntInterval{m}}\in \mathbb{N}^{m}$ be the counters values at step~$t$.
Initially, all counters are set to~$0$, i.e., $\bm{Y}(0)=\bm{0}$. At each step~$t\geq 1$, the sketch compares counter values associated with~$r_t$. $\CMSCU$ then only increments by~$1$ minimal value counters within~$\bm{h}(r_t)$. 
Specifically, the counters update in $\CMSCU$ is as follows, 
\begin{align}\label{e:DynamicsCU}
        \CBFCUElement{c}{t+1} =
        \begin{cases} 
                    \CBFCUElement{c}{t} + 1, &\text{if } c \in \argmin_{u\in \bm{h}(r_{t+1})}
                    \CBFCUElement{u}{t}, \\
                    \CBFCUElement{c}{t}, & \text{otherwise.}
        \end{cases}
    \end{align}
$\CMSCU$ answers a query for an item~$i$'s count at step~$t$ with the minimum value overall corresponding counters, denoted as $\hat n_{i}(t)$. Formally, $\hat{n}_{i}(t) \triangleq \min_{u\in \bm{h}(i)} Y_{u}(t)$. The error for an item $i$ is then equal to~$\hat n_{i}(t) - n_{i}(t)$. We use the symbols $\CBFCUElement{\max}{t}$ and $\CBFCUElement{\min}{t}$ to denote the maximum and minimum value counters at time $t$, respectively. The counters gap at step $t$, denoted as $G(t)$, is then equal to $\CBFCUElement{\max}{t}-\CBFCUElement{\min}{t}$.

Our aim is to theoretically quantify the relation between the error and the number of counters and hash functions used by $\CMSCU$ under two specific conditions. The first one, described in Assumption~\ref{assum:IdealHash}, supposes that hash functions map items uniformly and independently over~$\binom{\aIntInterval{m}}{d}$. The second condition, presented in Assumption~\ref{assum:DistinctRequests}, considers a specific type of stream where each item appears at most once.   
 \begin{assumption}[Ideal Hash Function Model]\label{assum:IdealHash}
        The sets~$\left( \bm{h}(i)\right)_{i\in \mathcal{I}}$ are i.i.d. uniform random variables over $d$-sized subsets of~$\aIntInterval{m}$, denoted as~$\binom{\aIntInterval{m}}{d}$.       
\end{assumption}
 \begin{assumption}\label{assum:DistinctRequests}
    The stream~$\mathbf{r}$ consists of distinct items, i.e.,~$r_t\neq r_s$ for every~$t\neq s\in \aIntInterval{T}$.
\end{assumption}
The errors for any two items absent from the stream $\bm{r}$ are identically distributed when Assumption~\ref{assum:IdealHash} holds. Let~$e^{*}(t,\bm{r})$ denote a random variable following their common probability distribution. Furthermore, for any two streams $\bm{r}$ and $\bm{a}$ satisfying Assumption~\ref{assum:DistinctRequests}, $e^{*}(t,\bm{a})$ and $e^{*}(t,\bm{r})$ share identical probability distribution and we define~$e^{*}(t)$ as a random variable with that distribution. We refer to~$e^{*}(t)$ as the \textit{estimation error} and $\nicefrac{e^{*}(t)}{t}$ as the \textit{average estimation error}.

\section{Related Work}
\label{sec:related}


In this section, we provide more details about previous results concerning $\CMSCU$, briefly mentioned in Section~\ref{ss:previousResults}.

Theoretical investigations into $\CMSCU$ likely commenced with the research conducted by Binachi et al.~\cite{bianchi2012modelingCMSCU}. Their primary focus was studying the dynamics of $\CMSCU$ when $d$ distinct counters are selected randomly at each time step. They employed a \textit{fluid limit} model to characterize the growth of counters. However, it's worth noting that this fluid limit approximation is specifically tailored for scenarios where the ratio $\nicefrac{d}{m}$ tends towards~$0$. Moreover, the paper did not offer a theoretical assessment of the precision of this approximation; instead, it relied on Monte Carlo simulations to validate the accuracy.

Subsequent works studied $\CMSCU$ under a stochastic stationary stream, i.e., $(r_t)_{t\in \aIntInterval{T}}$ are i.i.d. categorical random variables over the set of items~$\mathcal{I}$. For instance, Einziger and Friedman~\cite{GILBloomCMSCU} approximated $\CMSCU$'s dynamics with a stack of \textit{Bloom Filters}~\cite{BroderBloomFilter2003}. Unfortunately, their approach may fail to capture the conservative update advantages~\cite{MAZZIANE2022CMSCU}. Ben Mazziane et al.~\cite{MAZZIANE2022CMSCU} derived upper bounds on the estimation error, providing theoretical support that frequent items barely suffer any error in comparison to infrequent ones. However, their study overlooked the benefits of conservative updates for infrequent items and situations where item request probabilities are similar. Recently, Fusy and Kucherov~\cite{fusy2023phase,FusyExpCMSCU2023} considered the particular case where items' probabilities are all equal, and showed that if the ratio between the number of items in $\mathcal{I}$ and the total number of counters~$m$ remains below a specific threshold, the relative error in $\CMSCU$, namely $(\hat n_{i}(T)- n_{i}(T))/n_{i}(T)$, is asymptotically~$o(1)$, distinguishing it from $\CMS$.

 \section{Uniform Counters Selection Assumption}
   \label{sec:motivation}
    In the introduction, we highlighted experimental findings from prior research~\cite{bianchi2012modelingCMSCU,MAZZIANE2022CMSCU,FusyExpCMSCU2023} concerning the error in~$\CMSCU$, demonstrating two key points. First, infrequent items exhibit greater errors compared to others. 
Second, infrequent items' errors seem maximal when~$d$ distinct counters are uniformly selected at random within $\CMSCU$~\cite{bianchi2012modelingCMSCU}. This particular instance of $\CMSCU$ is formalized in~Assumption~\ref{assum:UniformCounters} and can be interpreted as the fusion of Assumptions~\ref{assum:IdealHash} and~\ref{assum:DistinctRequests}.
\begin{assumption}[Uniform Counters Selection]\label{assum:UniformCounters}
    The sets $(\bm{h}(r_t))_{t\in \mathbb{N}}$ are i.i.d. uniform random variables over~$\binom{\aIntInterval{m}}{d}$.
\end{assumption}


Many works adopt ideal hashing models, such as the one in Assumption~\ref{assum:IdealHash}, despite the impractical computational cost of their design~\cite{BroderBloomFilter2003,bianchi2012modelingCMSCU,GILBloomCMSCU,fusy2023phase}. Remarkably, these models often predict the performance of algorithms operating with practical hash functions of weaker notions of independence. A theoretical rationale for this phenomenon is elucidated in~\cite{mitzenmacher2008simple}.


Assumption~\ref{assum:DistinctRequests} deviates from typical data stream models. For example, previous studies of $\CMSCU$~\cite{GILBloomCMSCU,MAZZIANE2022CMSCU,fusy2023phase} assume a random stationary stream and provide probabilistic bounds on the error in terms of items' probabilities of appearance. However, these bounds fall short of capturing the benefits of $\CMSCU$ when items' probabilities are small. One could perceive a stream conforming to Assumption~\ref{assum:DistinctRequests} as a highly probable realization of a stochastic process where all items' probabilities are equal, and the catalog of items is infinite w.r.t. the stream length, i.e., $T \ll |\mathcal{I}|$. This perspective highlights streams following Assumption~\ref{assum:DistinctRequests} as scenarios where existing bounds fail to capture the advantages of~$\CMSCU$.


We conjecture that the expected estimation error for items in $\CMSCU$ reaches its maximum for infrequent items in a stream adhering to Assumption~\ref{assum:DistinctRequests} when the hash functions are ideal. Let $e_{i}(t,\bm{r})$ be the error of item~$i$ at step~$t$ in stream~$\bm{r}$. Conjecture~\ref{conj:maximalError} precisely states our claim. 

\begin{conjecture}\label{conj:maximalError}
Under Assumption~\ref{assum:IdealHash}, for any item $i$ and stream $\bm{r}$, the following inequality holds: $\E{e_{i}(t,\bm{r})} \leq \E{e^{*}(t)}$.
\end{conjecture}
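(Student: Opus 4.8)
The plan is to prove the conjecture by a coupling argument that compares the evolution of the sketch on an arbitrary stream $\bm{r}$ with its evolution on a ``maximally spread-out'' stream. First I would fix the item $i$ and the time horizon $t$ and observe that, conditioning on the hash set $\bm{h}(i)$, the error $e_i(t,\bm{r})$ equals $\min_{u\in\bm{h}(i)}Y_u(t)-n_i(t)$, and by Assumption~\ref{assum:IdealHash} the quantity $\E{e_i(t,\bm{r})}$ is an average over the uniform choice of $\bm{h}(i)$ and over the hash sets of all items in the stream. The key structural fact to extract is monotonicity: merging two occurrences of the same item (i.e.\ replacing item $j$ appearing $n_j\ge 2$ times by $n_j$ distinct fresh items, each appearing once, with fresh independent hash sets) can only increase the counters pointwise in the stochastic order, because a single conservative update on a fixed set $\bm{h}(j)$ raises the minimum of that set by exactly $1$, whereas two updates on two \emph{independent} sets raise (in expectation, and in a couplable sense) the relevant minima by at least as much.

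The main technical step is therefore to establish a stochastic-domination lemma of the form: if $\bm{r}'$ is obtained from $\bm{r}$ by ``splitting'' one repeated item into distinct singletons, then $\bm{Y}(t)$ under $\bm{r}'$ stochastically dominates $\bm{Y}(t)$ under $\bm{r}$ in the coordinatewise partial order, \emph{jointly} with the hash set of the queried item $i$ held fixed. I would prove this by induction on $t$, constructing an explicit coupling: process the two streams in lockstep, and at the step where the repeated item $j$ is handled, use the fact that the update rule~\eqref{e:DynamicsCU} is monotone in $\bm{Y}(t)$ and that adding an extra independent update event (on a fresh uniform hash set) cannot decrease any counter. The delicate point is that the conservative update is \emph{not} coordinatewise monotone in an obvious way when the argmin set changes; one must check that if $\bm{Y}\le \bm{Y}'$ coordinatewise then after one conservative update on the same hash set $\bm{h}$ the inequality is preserved --- this requires a short case analysis on which coordinates lie in $\argmin_{u\in\bm{h}}Y_u$ versus $\argmin_{u\in\bm{h}}Y'_u$, using that the increment is at most $1$ and that $\min_{u\in\bm{h}}Y_u \le \min_{u\in\bm{h}}Y'_u$.

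Once the domination lemma is in place, the conclusion follows quickly: iterating the splitting operation transforms any stream $\bm{r}$ into a stream $\bm{r}''$ of $T$ distinct items, all with independent uniform hash sets, and $\bm{Y}(t)$ under $\bm{r}''$ dominates $\bm{Y}(t)$ under $\bm{r}$ coordinatewise while $n_i(t)$ is unchanged (it is $0$ for split items and at most its original value for $i$ itself --- indeed one should also split $i$, replacing it by a distinct item so $n_i=0$, and note this only lowers the subtracted term $n_i(t)$, which again can only increase the error). Since $e_i = \min_{u\in\bm{h}(i)}Y_u - n_i$ is monotone nondecreasing in each $Y_u$ and in $-n_i$, taking expectations over $\bm{h}(i)$ and the remaining randomness gives $\E{e_i(t,\bm{r})}\le \E{e_i(t,\bm{r}'')}=\E{e^{*}(t)}$, where the last equality is the definition of $e^{*}(t)$ from the text (all streams satisfying Assumption~\ref{assum:DistinctRequests} give the same error distribution under Assumption~\ref{assum:IdealHash}).

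The step I expect to be the main obstacle is the coordinatewise-monotonicity of a single conservative update together with the insertion of an extra update event: the naive statement ``$\bm{Y}\le\bm{Y}'$ implies the post-update vectors are ordered'' can fail if one is not careful about simultaneously tracking which hash set is used, so the coupling has to be designed so that the \emph{same} hash realizations drive both chains except for the one extra event, and the extra event's hash set must be chosen independently in a way that still yields a valid coupling of the (otherwise) identically distributed remaining hash sets. Handling the bookkeeping of ``fresh'' versus ``reused'' randomness across the induction --- and verifying that the queried item's hash set can be kept fixed throughout without breaking the domination --- is where the real work lies; everything else is assembling monotone maps and passing to expectations.
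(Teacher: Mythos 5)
First, note that the statement you are proving is stated in the paper as a \emph{conjecture}: the authors do not prove it. They only prove the within-stream comparison $\E{e_{i}(t,\bm{r})}\leq \E{e^{*}(t,\bm{r})}$ (Proposition~\ref{prop:WorstErrorItemNotReq}, via a hash-swapping bijection and an auxiliary process), support the remaining step $\E{e^{*}(t,\bm{r})}\leq \E{e^{*}(t)}$ only by citing simulations, and list a full proof as future work. So your proposal is an attempt at an open problem, and it must be judged on its own merits.

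Your plan has a genuine gap at its central step: the stochastic-domination lemma for the ``splitting'' operation is false. The single-step monotonicity you describe (if $\bm{Y}\leq\bm{Y}'$ coordinatewise, one conservative update on the \emph{same} hash set preserves the order) is fine --- it is essentially Lemma~\ref{lem:RecursionDominationCounting} of the paper. But splitting a repeated item into fresh singletons does not yield coordinatewise domination, not even in distribution. Take $m=3$, $d=2$, and the stream $(j,j)$ with $\bm{h}(j)=\{1,2\}$: the counters end at $(2,2,0)$. In the split stream $(j_1,j_2)$, the realization $\bm{h}(j_1)=\{1,2\}$, $\bm{h}(j_2)=\{1,3\}$ (probability $2/3$ that the two sets differ) ends at $(1,1,1)$, which dominates \emph{no} permutation of $(2,2,0)$. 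Hence no coupling can place the split-stream counter vector above the original one almost surely, and multivariate stochastic domination fails outright; no amount of care in ``bookkeeping fresh versus reused randomness'' can repair this. What remains true in this example is the inequality for the specific increasing functional the conjecture actually needs, $f(\bm{y})=\binom{m}{d}^{-1}\sum_{S}\min_{u\in S}y_u$ (one computes $\E{f}=8/9$ for the split stream versus $2/3$ for the original), so the conjecture is not contradicted --- but establishing $\E{f(\bm{Y}'')}\geq\E{f(\bm{Y})}$ requires an argument tailored to this functional rather than a pointwise coupling, and your proposal does not supply one.
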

To support Conjecture~\ref{conj:maximalError}, Proposition~\ref{prop:WorstErrorItemNotReq} formally proves that for any stream, the maximal expected estimation error occurs for items with the lowest count. 

\begin{proposition}\label{prop:WorstErrorItemNotReq}
    Under Assumption~\ref{assum:IdealHash}, for any item~$i$ and stream~$\bm{r}$, the following inequality holds: $\E{e_{i}(t,\bm{r})} \leq \E{e^{*}(t,\bm{r})}$.
\end{proposition}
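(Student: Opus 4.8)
The statement claims that for any fixed stream $\bm{r}$ and any item $i$, the expected error of $i$ is at most the expected error of an item absent from the stream. The natural approach is a coupling argument comparing the run of $\CMSCU$ on $\bm{r}$ with the run on an augmented stream. The key observation is that an item $i$ with count $n_i(t) = k \geq 1$ in $\bm{r}$ "behaves like" an absent item if we imagine deleting its $k$ occurrences; the error $\hat n_i(t,\bm{r}) - n_i(t,\bm{r})$ should only shrink relative to the error of the corresponding absent item, because the extra increments that item $i$ itself caused can only raise the counters in $\bm{h}(i)$, and under conservative updates raising $\min_{u\in\bm{h}(i)} Y_u$ by executing $i$'s own updates increases both $\hat n_i$ and $n_i$ by the same amount (each self-update of $i$ increments the current minimum over $\bm{h}(i)$ by exactly one, hence $\hat n_i$ by one, matching the increment of $n_i$). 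So the plan is:

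\textbf{Step 1.} Fix $i$ and condition on the hash set $\bm{h}(i) = S$. Let $\bm{r}'$ be the stream obtained from $\bm{r}$ by removing all occurrences of $i$ (so $i$ is absent from $\bm{r}'$, and $\bm{r}'$ still satisfies whatever is needed; note Assumption~\ref{assum:DistinctRequests} is not assumed here, only Assumption~\ref{assum:IdealHash}). Couple the two executions by using the same hash sets $\bm{h}(j)$ for all $j \neq i$ and, in the run on $\bm{r}$, the set $S$ for $i$.

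\textbf{Step 2.} Prove by induction on $t$ the following invariant relating the two counter vectors $\bm{Y}(t)$ (run on $\bm{r}$) and $\bm{Y}'(t)$ (run on $\bm{r}'$): for every counter $c$, $Y_c(t) \geq Y'_c(t)$, and moreover $\min_{u\in S} Y_u(t) - \min_{u\in S} Y'_u(t) \leq n_i(t)$ (the number of occurrences of $i$ seen so far in $\bm{r}$). The first inequality is monotonicity of conservative updates under stream augmentation; the second says the "self-inflicted" excess on $i$'s own counters is at most $n_i(t)$. Both are preserved under a non-$i$ step (a conservative update can only propagate the ordering) and under an $i$-step (which advances $n_i$ by one and raises $\min_{u\in S} Y_u$ by exactly one while leaving $\bm{Y}'$ unchanged). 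The monotonicity lemma — that adding occurrences to a stream never decreases any counter under conservative updates — is the technical heart; it itself goes by induction, with the subtle case being when the argmin sets differ between the two runs.

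\textbf{Step 3.} Conclude: $e_i(t,\bm{r}) = \min_{u\in S} Y_u(t) - n_i(t) \leq \min_{u\in S} Y'_u(t) \leq e^{*}(t,\bm{r}')$ conditionally on $\bm{h}(i)=S$, where the last step uses that in the run on $\bm{r}'$ the quantity $\min_{u\in S} Y'_u(t)$ is exactly the error that an absent item with hash set $S$ would incur (its count being $0$). Since under Assumption~\ref{assum:IdealHash} the hash set $S$ is uniform over $\binom{\aIntInterval{m}}{d}$ and independent of everything in $\bm{r}'$, taking expectation over $S$ gives $\E{e_i(t,\bm{r})} \leq \E{e^{*}(t,\bm{r}')}$; finally $e^{*}(t,\bm{r}')$ and $e^{*}(t,\bm{r})$ have the same distribution because the error of an absent item depends on $\bm{r}'$ only through the i.i.d. hash structure (removing a single item from the catalog of hashed items does not change the law of the counter process as seen by an independent fresh hash set), yielding the claim.

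\textbf{Main obstacle.} The delicate point is Step 2, specifically the monotonicity of conservative updates: showing $Y_c(t) \geq Y'_c(t)$ for all $c$ when $\bm{r}$ has extra occurrences. Conservative updates are not a simple coordinatewise-monotone map, since which counters get incremented depends on the current minimum over a hash set, and that minimum can sit at different coordinates in the two runs. One must argue that if $\bm{Y} \geq \bm{Y}'$ pointwise, then after one conservative update with the same hash set $H$ we still have $\bm{Y}^{+} \geq (\bm{Y}')^{+}$: for $c \notin \argmin$ in the $\bm{Y}$-run this is immediate; for $c \in \argmin_{u\in H} Y_u$, if also $c \in \argmin_{u\in H} Y'_u$ both increment and the inequality is preserved, while if $c \notin \argmin_{u\in H} Y'_u$ then $Y'_c > \min_{u\in H} Y'_u$, so $Y'_c \geq \min_{u\in H} Y'_u + 1$ and $Y_c = \min_{u\in H}Y_u \geq \min_{u\in H} Y'_u$ forces $Y_c + 1 \geq Y'_c$ — care is needed to chain these inequalities cleanly, and handling the interleaving with $i$'s own steps requires carrying the second part of the invariant simultaneously.
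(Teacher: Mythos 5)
Your overall strategy is the paper's: compare the true run with an auxiliary run in which item $i$'s updates are suppressed (the paper's process $\mathbf{A}$ is exactly your $\bm{Y}'$), and sandwich the two counter vectors. However, your Step 2 has a genuine gap: the second invariant, $\min_{u\in S}Y_u(t)-\min_{u\in S}Y'_u(t)\leq n_i(t)$, is true but is not inductively closed as stated. A non-$i$ update with hash set $H$ can raise $\min_{u\in S}Y_u$ without raising $\min_{u\in S}Y'_u$, because the argmins over $H$ may sit at different coordinates in the two runs, and the min-gap hypothesis gives no control over coordinates outside $S$. Concretely, take $m=3$, $d=2$, $S=\{1,2\}$, $\bm{Y}=(5,6,5)$, $\bm{Y}'=(4,5,3)$, $n_i(t)=1$: both of your stated invariants hold, yet an update with $H=\{1,3\}$ increments counters $1$ and $3$ in the $\bm{Y}$-run but only counter $3$ in the $\bm{Y}'$-run, giving $\min_S Y=6$ versus $\min_S Y'=4$, a gap of $2>n_i$. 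This configuration is in fact unreachable, but only because a \emph{stronger} invariant holds; your induction as written cannot rule it out. The fix is to strengthen the hypothesis to the coordinatewise bound $Y_c(t)\leq Y'_c(t)+n_i(t)$ for every $c$ --- precisely the left half of the paper's sandwich $Y_u(\bm{h},t)-n_i(t)\leq A_u(\bm{h},t)\leq Y_u(\bm{h},t)$. With that hypothesis the step closes: if $Y_u-Y'_u$ is already at its cap $n_i$ and $u\in\argmin_{c\in H}Y_c$, then every $c\in H$ satisfies $Y'_c\geq Y_c-n_i\geq Y_u-n_i=Y'_u$, so $u\in\argmin_{c\in H}Y'_c$ and $Y'_u$ is incremented too; the coordinatewise bound then implies your min-gap statement.

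A second, smaller error sits in your final step: $e^{*}(t,\bm{r}')$ and $e^{*}(t,\bm{r})$ are \emph{not} identically distributed. The reduced stream $\bm{r}'$ carries $n_i(t)$ fewer updates, so its counter process is genuinely (weakly) smaller; what you are removing is not merely an entry of the hash catalog but actual increments. Fortunately the inequality you need, $\E{e^{*}(t,\bm{r}')}\leq \E{e^{*}(t,\bm{r})}$, follows from the pointwise domination $\bm{Y}'(t)\leq\bm{Y}(t)$ that you do establish, applied to a fresh independent hash set; so this is a wrong justification rather than a wrong conclusion. Note that the paper avoids this extra step entirely by a different averaging device: it swaps the hash assignments of $i$ and of an absent item $x$ via a bijection on $\mathcal{H}$, which lets it compare $e_i(\bm{h},t)$ with $e_x(\bm{g},t)$ realization by realization within the single stream $\bm{r}$. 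Your conditioning-on-$\bm{h}(i)=S$ route is equally legitimate and arguably more transparent, but it is what forces the detour through $\bm{r}'$ and hence the extra domination argument. With these two repairs your proof is correct and essentially coincides with the paper's.
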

 Proposition~\ref{prop:WorstErrorItemNotReq} (proof in Appendix~\ref{proof:propositionWorstErrorItem}) partially validates the empirical observation that the expected error of an item~$i$ at time~$t$ in $\CMSCU$ tends to increase as~$n_{i}(t)$ decreases~\cite{bianchi2012modelingCMSCU,MAZZIANE2022CMSCU}. Notably, Bianchi et al.~\cite{bianchi2012modelingCMSCU} experimentally observed that frequent items have negligible errors, whereas infrequent ones are estimated with the same value, called \emph{error floor}, which corresponds to $\E{e^{*}(t,\bm{r})}$ in our context. Proposition~\ref{prop:WorstErrorItemNotReq} and the simulations in~\cite{bianchi2012modelingCMSCU}, suggesting that $\E{e^{*}(t,\bm{r})}\leq \E{e^{*}(t)}$, enhance the plausibility of Conjecture~\ref{conj:maximalError}. The remainder of this paper focuses on computing the average estimation error $\nicefrac{e^{*}(t)}{t}$.

\section{Exact Computation}
\label{ss:ExactComputation}

One direct approach to compute the average estimation error in~$\CMSCU$ under Assumption~\ref{assum:UniformCounters}, is to recognize that~$(\mathbf{Y}(t))_t$ forms a discrete-time Markov chain, allowing for the computation of the probability that~$\mathbf{Y}(t)$ is in a specific state. However, the state space of this Markov chain grows fast with~$t$. To deal with the growing state space, we present an alternate representation of $(\mathbf{Y}(t))_t$ enabling a more efficient computation of the average estimation error, especially in the particular case where $d=m-1$.


Our method considers the Markov chain representing the number of counters deviating from the minimum value counter by any~$l\in\mathbb{N}$. We denote the state of this Markov chain at step~$t$ as $\boldsymbol{\Delta}(t)= (\Delta_l(t))_{l\in \mathbb{N}}$ where~$\Delta_l(t)$ is formally defined as,
\begin{align}
        \Delta_l(t) \triangleq \sum_{c=1}^{m} \mathds{1}\left(Y_{c}(t) - Y_{\min}(t) = l   \right).
\end{align}

Theorem~\ref{th:main-particular} examines the stochastic evolution of~$(\Excess(t))_t$ when~$d=m-1$, to characterize the average estimation error and the average counter rate, namely~$\nicefrac{\sum_{u=1}^{m} Y_{u}(t)}{tm}$, in Property~\ref{thPro:errordm1}, and the counters gap in Property~\ref{thPro:gapdm1}. 

\begin{theorem}\label{th:main-particular}
 Under Assumption~\ref{assum:UniformCounters} and when~$d=m-1$, $\CMSCU$ satisfy the following asymptotic properties:
\begin{enumerate}
    \item As $T\to \infty$, both the average estimation error~$\nicefrac{e^{*}(T)}{T}$ and the average counter rate converge almost surely to~$\nicefrac{1}{2}$.  \label{thPro:errordm1}
    \item The \emph{counters gap} $G(T)$ is less or equal than~$1$ w.h.p. in~$T$ and~$m$.  \label{thPro:gapdm1}
\end{enumerate}
\end{theorem}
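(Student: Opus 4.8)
The plan is to analyze the Markov chain $(\Excess(t))_t$ in the special case $d=m-1$, where its structure collapses dramatically. When $d=m-1$, at each step exactly one counter is \emph{excluded} from the update, chosen uniformly among the $m$ counters; all other counters are incremented if and only if they are currently at the minimum value. I would first argue that, starting from $\bm 0$, the reachable states of $\Excess(t)$ are extremely limited: since only one counter is left untouched per step, the spread of counter values stays tightly controlled, and I expect to show by induction that the gap $G(t)\in\{0,1\}$ for all $t$ with the single exception of transient events, or more precisely that the only states with nonzero probability are of the form ``$k$ counters at the minimum and $m-k$ counters one above'' plus occasionally one straggler. Concretely I would track the scalar $C(t) \triangleq \Delta_0(t)$, the number of minimum-valued counters, and show $(C(t))_t$ is itself (essentially) a Markov chain on $\{1,\dots,m\}$.

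Next I would pin down the transition rule for $C(t)$. If currently $k=C(t)$ counters sit at the minimum and $m-k$ sit one above, then at step $t+1$ the excluded counter is one of the $k$ minimum counters with probability $k/m$ — in which case the other $k-1$ minimum counters increment, leaving $k-1$ at the new... wait, the minimum does not move unless all minimum counters increment; so if $k\ge 2$ and the excluded counter is a minimum one, the minimum stays put and $C(t+1)=k-1$ (unless $k-1=0$). If the excluded counter is one of the $m-k$ non-minimum counters (probability $(m-k)/m$), then all $k$ minimum counters increment; if $k<m$ the minimum rises by one and the $m-k$ counters that were one above now become the new minimum, so $C(t+1)=m-k$... but the excluded one didn't increment so it's now at the old minimum, one \emph{below}; this forces a re-examination and shows the gap-$\le 1$ structure is what must be proven carefully. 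I would set up the correct two-or-three-state local description, identify the resulting finite ergodic chain, compute its stationary distribution $\bm\pi$ explicitly (I expect something like $\pi_k \propto$ a simple closed form, possibly uniform or binomial-like), and then apply the ergodic theorem / law of large numbers for Markov chains: the average counter rate is $\lim_T \frac1{Tm}\sum_u Y_u(T)$, which equals the long-run fraction of counters incremented per step, namely $\E_{\bm\pi}[(\text{number incremented})/m]$. By the symmetry of the chain's stationary law I would argue this expectation equals $\tfrac12$. The estimation error $e^*(T)$ is $Y_{\min}(T)$ (the count of an absent item is the min over $d=m-1$ random counters, which is $Y_{\min}$ w.h.p., and in any case $e^*(T)=\hat n_i(T)$ since $n_i(T)=0$), and $Y_{\min}(T)/T$ has the same limit $\tfrac12$ because $Y_{\min}$ increases by $1$ exactly on the steps where all minimum counters increment, whose long-run frequency is again $\tfrac12$ by the stationary balance.

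For Property~\ref{thPro:gapdm1}, I would show $G(T)\le 1$ with high probability by establishing that the ``bad'' states (gap $\ge 2$) either are unreachable from $\bm 0$ or are visited with vanishing frequency; the clean route is to prove by induction on $t$ that the support of $\Excess(t)$ consists only of states with $G(t)\le 1$ up to a lower-order correction, using the one-excluded-counter structure — once all counters except one are at the minimum and that one is at min$+1$, a single excluded-counter event can push a counter at most one notch out of line, and the very next steps reabsorb it. Then $G(T)\le 1$ deterministically, or fails only on an event of probability $o(1)$ in $T$ and $m$, which I would control via a union bound over the (few) escape transitions combined with the exponential mixing of the finite chain.

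The main obstacle I anticipate is getting the local transition structure of $(C(t))_t$ exactly right, including the boundary cases $k=1$ and $k=m$, and proving rigorously that the gap never exceeds $1$ (or exceeds it only negligibly): the subtlety is that when the excluded counter fails to increment while the minimum rises, that counter is momentarily \emph{left behind} at the old minimum, so the naive picture ``all counters within a window of width $1$'' needs a careful invariant — something like ``at every step at most one counter lies strictly below the level shared by a strict majority.'' Establishing this invariant and checking it is preserved under all four transition types (excluded counter at min vs.\ above, times minimum rises vs.\ not) is the delicate combinatorial core; once it is in place, the stationary-distribution computation and the almost-sure convergence via the Markov-chain ergodic theorem are routine.
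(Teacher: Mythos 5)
There is a genuine structural gap: your proposed state description of the chain is wrong, and the invariant you hope to establish for Property~\ref{thPro:gapdm1} is false as a deterministic statement. With $d=m-1$, after the first step the reachable states of $\Excess(t)$ are \emph{not} of the form ``$k$ counters at the minimum and $m-k$ one above.'' They are: either all $m$ counters equal, or exactly \emph{one} counter at the minimum and the other $m-1$ counters sharing the common value $Y_{\min}+f$, where the gap $f$ can be \emph{any} positive integer. Indeed, already at $t=2$, with probability $1/m$ the excluded counter is the unique straggler at $0$ and the other $m-1$ jump from $1$ to $2$, giving $G(2)=2$; iterating, $G(t)$ can reach any level. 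So your scalar $C(t)=\Delta_0(t)$ only ever takes the values $1$ and $m$ and carries almost no information, your transition rule ``$C(t+1)=k-1$'' is incorrect (when the excluded counter is one of $k\ge 2$ minima, the other $k-1$ all increment and $C(t+1)=1$), and there is no finite two-or-three-state local description. The correct object — which is the paper's key idea — is the gap $f=L(\Excess(t))$ itself, which is a birth--death chain on all of $\mathbb{N}$ with $p_{0,1}=1$, $p_{f,f+1}=1/m$, $p_{f,f-1}=(m-1)/m$; its explicit geometric stationary distribution $\pi_0=\frac{m-2}{2(m-1)}$, $\pi_f=\frac{m(m-2)}{2(m-1)^{f+1}}$ is what yields all three claims.

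Consequently, Property~\ref{thPro:gapdm1} cannot be proved by the deterministic induction you sketch (``$G(T)\le 1$ deterministically, or fails only on an event of probability $o(1)$''): the gap exceeds $1$ infinitely often almost surely, and the statement is genuinely a with-high-probability claim in $m$, obtained as $\lim_T \frac1T\sum_t \mathds{1}(G(t)\ge 2)=\sum_{f\ge 2}\pi_f=\frac{m}{2(m-1)^2}\to 0$. Your union-bound-plus-mixing fallback presupposes a finite chain, which this is not. Your treatment of Property~\ref{thPro:errordm1} is closer in spirit (ergodic theorem applied to increment indicators), and the identification of $e^*(T)/T$ with $Y_{\min}(T)/T$ in the limit is salvageable since their difference is the gap, which is tight; but both the error rate and the counter rate computations require the stationary law of the unbounded birth--death chain (the counter rate works out to $\frac{1}{m}\bigl((m-1)\pi_0+(1-\pi_0)\frac{2(m-1)}{m}\bigr)=\frac12$, not a symmetry argument). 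You correctly sensed that the ``left-behind'' excluded counter is the delicate point, but the resolution is to let the gap be unbounded and control it in distribution, not to force a width-one window.
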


\begin{proof}[Sketch proof]
In the particular case of~$d=m-1$, $(\Excess(t))_t$ can be modeled as a \textit{birth and death} process, allowing for closed-form expression for its limiting distribution. Leveraging this, the proof of Theorem~\ref{th:main-particular} applies~\cite[Prop. 4.3, p. 215]{ross2014introduction} to prove the almost sure convergence properties. For completeness, we present the proposition. 
\begin{proposition}\cite[Prop. 4.3, p. 215]{ross2014introduction}\label{prop:ergodicMarkovMain}
  Let $(X(t))_t$ be an irreducible Markov chain with stationary probabilities $\Pi_f$ for any $f\in\mathbb{N}$, and let $r$ be a bounded function on the state space. Then, 
            \begin{align}
              \lim_{T\to \infty} \frac{1}{T} \sum_{t=1}^{T} r(X_t) = \sum_{f=0}^{\infty} r(f) \Pi_f , \text{ w.p. 1}. 
            \end{align}
\end{proposition}
For instance, to prove the average estimation error result, it suffices to set~$X(t)$ to $(\delta_t, \Excess(t), \Excess(t-1))$ and $r(X(t))$ to $\delta_t$, where $\delta_t = e^{*}(t) - e^{*}(t-1)$.

The detailed proof is presented in Appendix~\ref{proof:th-main-particular}. 
\end{proof}



Theorem~\ref{th:main-particular} highlights a sharp distinction between~$\CMS$ and~$\CMSCU$ under the uniform counters selection regime, particularly when~$d=m-1$. In this scenario, $\CMS$ with~$T$ rounds can be conceptualized as a \textit{balls and bins} process with~$m$ bins and~$T(m-1)$ balls. Consequently, the average counter rate equals~$\nicefrac{(m-1)}{m}$, and the maximum value counter is~$\nicefrac{T(m-1)}{m} + \Theta(\sqrt{T})$ for large~$T$~\cite{berenbrink2000balanced}. In contrast, when both~$T$ and~$m$ are large, $\CMSCU$ exhibits a slower average counter rate of~$\nicefrac{1}{2}$. Moreover, the deviation from the average counter value does not grow with~$T$.

The contrast in deviation from the average between~$\CMS$ and~$\CMSCU$ mirrors that observed between single-choice and $d$-choice variants in balls and bins processes. Interestingly, for the balls and bins variants, this contrast extends beyond the particular case where~$d=m-1$ to the broader setting where~$d\geq 2$. Specifically, in the single choice, the deviation from the average is~$\mathcal{O}(\sqrt{T})$, whereas in the $d$-choice, it is~$\mathcal{O}\left(\nicefrac{\ln (\ln{m})}{\ln{d}}\right)$~\cite{berenbrink2000balanced}.

\section{Lower and Upper Bounds}
\label{ss:LowerAndUpperBounds}



For generic values of~$m$ and~$d$, the computation of the average estimation error by examining the infinite state space Markov process~$(\Excess(t))_t$ is more difficult. To manage the infinite state space, we introduce $\LBCU$ (Algorithm~\ref{alg:LB-CBF-CU}) and $\UBCU$ (Algorithm~\ref{alg:UB-CBF-CU}), two variants of $\CMSCU$ in which the counters gap is capped by a constant~$g\in\mathbb{N}$.  


\begin{algorithm}[t]
\caption{$\LBCU$}
\begin{algorithmic}[1]
\State \textbf{Input:} $(r_t)_{t\in \aIntInterval{T}}$, $g$
\State  $\LBCBFCU{g}{0} \gets \mathbf{0}$
\For{$t$ in $\aIntInterval{T}$}
    \If{$G^{\text{lb,g}}(t-1) = g$, and $\min_{u\in \bm{h}(r_t)} \LBCBFCUElement{g}{u}{t-1}=\LBCBFCUElement{g}{\text{max}}{t-1}$}
        \State $\LBCBFCU{g}{t} \gets \LBCBFCU{g}{t-1}$ \label{line:LBNoUpdate}
    \Else
        \State $\LBCBFCU{g}{t} \gets$ \text{CU}$\left(\LBCBFCU{g}{t-1}, \bm{h}(r_t)\right)$ \label{line:LBCuUpdate}
    \EndIf
\EndFor
\end{algorithmic}
\label{alg:LB-CBF-CU}
\end{algorithm}

\begin{algorithm}[t]
\caption{$\UBCU$}
\begin{algorithmic}[1]
\State \textbf{Input:} $(r_t)_{t\in \aIntInterval{T}}$, $g$
\State $\UBCBFCU{g}{0}  \gets \mathbf{0}$
\For{$t$ in $\aIntInterval{T}$}
    \State $\UBCBFCU{g}{t} \gets$ \text{CU}$\left(\UBCBFCU{g}{t-1}, \bm{h}(r_t)\right)$ \label{line:UbCuUpdate}
    \If{$G^{\text{ub,g}}(t-1) = g$, and $\min_{u\in \bm{h}(r_t)} \UBCBFCUElement{g}{u}{t-1}=\UBCBFCUElement{g}{\text{max}}{t-1}$}
        \State $\UBCBFCUElement{g}{c}{t}= \UBCBFCUElement{g}{c}{t-1} + 1, \; \forall c\in \aIntInterval{m}: \; \UBCBFCUElement{g}{c}{t-1}= \UBCBFCUElement{g}{\text{min}}{t-1}$\label{line:UbMinUpdate}
    \EndIf
\EndFor
\end{algorithmic}
\label{alg:UB-CBF-CU}
\end{algorithm}

We extend the notation employed for~$\CMSCU$ in Section~\ref{sec:problem} to $\LBCU$ and $\UBCU$ by including the superscripts~"$\text{lb},g$" and~"$\text{ub},g$", respectively. Initially, all counters are equal to~$0$. At each step~$t\geq 1$, both $\LBCU$ and $\UBCU$ update counter values based on the previous state, distinguishing between two cases:
  \begin{enumerate}
    \item If the gap between maximum and minimum counter values equals~$g$, and all selected counters have values equal to the maximum counter value, $\LBCU$ performs no updates (line~\ref{line:LBNoUpdate}), while in $\UBCU$, counters with values equal to $\UBCBFCUElement{g}{\text{min}}{t-1}$ are incremented (line~\ref{line:UbMinUpdate}), along with the selected counters (line~\ref{line:UbCuUpdate}).
    \item Otherwise, $\LBCU$ and $\UBCU$ update the counters according to the conservative update in~\eqref{e:DynamicsCU} (lines~\ref{line:LBCuUpdate} and~\ref{line:UbCuUpdate}). 
\end{enumerate}


Define~$\ell_g(T)$ and~$U_{g}(T)$ as the expected average estimation errors in $\LBCU$ and $\UBCU$, respectively. Specifically, $\ell_g(T) = \nicefrac{\E{\LBErrorstar{g}{T}}}{T}$ and $U_g(T) = \nicefrac{\E{\UBErrorstar{g}{T}}}{T}$. Theorem~\ref{th:main} shows that $\ell_g(T)$ and $U_g(T)$ match the expected average estimation error of~$\CMSCU$ ($\E{\nicefrac{e^{*}(T)}{T}}$) when $g \geq T$ (Property~\ref{eq:stationary}). Moreover, when $g < T$, it holds that $\ell_g(T)<\ell_{g+1}(T)$ and $U_g(T)>U_{g+1}(T)$, providing multiple bounds on~$\E{\nicefrac{e^{*}(T)}{T}}$ that progressively tighten for larger values of~$g$ (Property~\ref{eq:sandwich_expectation}).

The capped counters gap in the two~$\CMSCU$ variants facilitates the analysis of their estimation errors under Assumption~\ref{assum:UniformCounters}. This characteristic ensures that, for each variant, the Markov chain representing the number of counters deviating from the minimum value counter by any~$l\in\mathbb{N}$ has a finite state space. 
We denote this Markov chain as $(\LBExcess{g}(t))_t$ for $\LBCU$ and $(\UBExcess{g}(t))_t$ for $\UBCU$. Theorem~\ref{th:main} examines the stochastic evolution of~$(\LBExcess{g}(t))_t$ and~$(\UBExcess{g}(t))_t$ to prove that i) The values $\ell_g(T)$ and $U_g(T)$ can be efficiently computed with a time complexity of~\( \mathcal{O}( (T+g) m \binom{m + g - d}{g}) \) (Property~\ref{eq:complexity}) and, ii) The quantity~$\nicefrac{e^{*}(T)}{T}$ concentrates in the interval~$[\ell_{g}(\infty), U_{g}(\infty)]$ with high probability, where $\ell_{g}(\infty)$ and $U_{g}(\infty)$ are the limits of $\ell_g(T)$ and $U_g(T)$ when~$T$ tends to infinity (Property~\ref{eq:sandwich_proba}). The limits $\ell_{g}(\infty)$ and $U_{g}(\infty)$ are determined by the limiting distributions of $(\LBExcess{g}(t))_t$ and $(\UBExcess{g}(t))_t$.

\begin{theorem}\label{th:main}
   Under Assumption~\ref{assum:UniformCounters}, the sequences $(\ell_g(T))_g$ and $(U_g(T))_g$ along with the average estimation error $\nicefrac{e^{*}(T)}{T}$ satisfy the following properties:
    \begin{enumerate}
        \item When \( g \geq T \), \( \ell_g(T) = \E{\nicefrac{e^{*}(T)}{T}} = U_g(T) \). \label{eq:stationary}
        \item For \( g < T \), \(\ell_{g}(T) < \ell_{g+1}(T) \) and \(U_{g}(T) > U_{g+1}(T) \). \label{eq:sandwich_expectation}
        \item  The time complexity for computing \( \ell_g(T) \) and \( U_g(T) \) is  \( \mathcal{O}( (T+g) m\binom{m + g - d}{g}) \). \label{eq:complexity}
        
        \item As \( T \to \infty \), the probability of \( \ell_g(\infty) \leq \nicefrac{e^{*}(T)}{T} \leq U_g(\infty) \) converges to 1. \label{eq:sandwich_proba}
    \end{enumerate}
\end{theorem}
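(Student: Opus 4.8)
The plan is to prove the four properties of Theorem~\ref{th:main} essentially in the order they are stated, since each relies on a coupling/monotonicity idea or a Markov-chain ergodicity argument that parallels the toolkit already used for Theorem~\ref{th:main-particular}.

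\textbf{Step 1 (Property~\ref{eq:stationary}: the cap is inactive when $g\geq T$).}
First I would observe that in the original $\CMSCU$ under Assumption~\ref{assum:UniformCounters}, at each step at least one minimal counter is incremented, so $\CBFCUElement{\min}{t}$ is non-decreasing and $\CBFCUElement{\max}{t}\le t$; hence $G(t)\le t$ for all $t$. Consequently, along any trajectory of length $T$, the gap never reaches $g$ when $g\ge T$, so the exceptional branch (line~\ref{line:LBNoUpdate} in Alg.~\ref{alg:LB-CBF-CU}, lines~\ref{line:UbMinUpdate} in Alg.~\ref{alg:UB-CBF-CU}) is never triggered, and both $\LBCU$ and $\UBCU$ coincide step-by-step with $\CMSCU$ driven by the same hash sequence $(\bm h(r_t))_t$. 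This gives $\LBErrorstar{g}{T}=e^{*}(T)=\UBErrorstar{g}{T}$ as random variables under the coupling, and taking expectations yields the claim. I would make the ``error of an absent item'' well-defined here exactly as in Section~\ref{sec:problem} (it is $-\min_u$ of the relevant counters evaluated on a fresh uniform $d$-subset, equivalently $\CBFCUElement{\min}{T}$ in distribution up to the shift, so that $e^{*}(T)=T-\CBFCUElement{\min}{T}$ in law); this identity $e^{*}(T)=T-Y_{\min}(T)$ — and its $\LBCU/\UBCU$ analogues — is worth stating cleanly because it reduces everything below to tracking $Y_{\min}$, equivalently the total mass pushed into $\boldsymbol{\Delta}$.

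\textbf{Step 2 (Property~\ref{eq:sandwich_expectation}: monotonicity in $g$).}
The heart of the argument is a coupling showing, pathwise, $Y_{\min}^{\text{lb},g}(t)\le Y_{\min}^{\text{lb},g+1}(t)$ and $Y_{\min}^{\text{ub},g}(t)\ge Y_{\min}^{\text{ub},g+1}(t)$, with strict inequality in expectation once $g<T$ (because with positive probability a trajectory hits gap $g$ before time $T$ and then the $g$ and $g{+}1$ chains genuinely diverge, the smaller-$g$ chain suppressing an increment of the minimum in $\LBCU$ and forcing an extra one in $\UBCU$). Running the same hash stream $(\bm h(r_t))_t$ into both caps, I would prove by induction on $t$ a domination between the two counter vectors: for $\LBCU$, that $\mathbf Y^{\text{lb},g}(t)$ is pointwise $\le \mathbf Y^{\text{lb},g+1}(t)$ and moreover their difference is controlled so that minimal-counter sets nest appropriately (this is the delicate bookkeeping — conservative update is not monotone under arbitrary perturbations, so the induction hypothesis must be strong enough, e.g. a ``majorization up to the cap'' statement on $\boldsymbol{\Delta}$). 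Then $\ell_g(T)=\tfrac1T\E{T-Y_{\min}^{\text{lb},g}(T)}$ is decreasing-in-$Y_{\min}$ hence increasing in $g$; symmetrically for $U_g$. For the sandwich $\ell_g(T)\le \E{e^{*}(T)/T}\le U_g(T)$, combine Step~1's identity at $g=T$ with Step~2's monotonicity: $\LBCU$ with a smaller cap only ever blocks increments of the minimum relative to $\CMSCU$, and $\UBCU$ only ever adds them, and a direct coupling of each variant against $\CMSCU$ (again, same hash stream) gives $Y_{\min}^{\text{lb},g}(T)\le Y_{\min}(T)\le Y_{\min}^{\text{ub},g}(T)$ for every $g$.

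\textbf{Step 3 (Property~\ref{eq:complexity}: computing the bounds).}
Here I would argue that $(\LBExcess{g}(t))_t$ and $(\UBExcess{g}(t))_t$ are Markov chains on the finite state space $\{(\Delta_0,\dots,\Delta_g)\in\mathbb N^{g+1}:\sum_l \Delta_l=m,\ \Delta_g\ge 1\ \text{allowed}\}$ — identifying its size with $\binom{m+g-d}{g}$ uses that any reachable state has $\Delta_0\ge m-d$ (since a minimum counter is touched at every non-exceptional step and there are $\ge m-d$ of them when the gap is small; this reachability fact needs a short lemma). Transitions correspond to choosing a uniform $d$-subset of the $m$ counters and applying the (modified) conservative update; the distribution of the new $\boldsymbol{\Delta}$ depends on the old one only through how the $d$ chosen counters split across the levels $0,\dots,g$, and there are $\mathcal{O}(m)$ relevant outcome-types per state, giving $\mathcal{O}(m\binom{m+g-d}{g})$ nonzero transition entries. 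Then $\ell_g(T)$ is obtained by pushing the initial distribution forward $T$ steps and accumulating the per-step increment of $-Y_{\min}$ (which, by the $e^{*}=t-Y_{\min}$ identity, is $1$ minus the indicator that the minimum advanced — a deterministic function of the transition taken), costing $\mathcal{O}(T)$ sparse matrix–vector products; the additive $g$ in $T+g$ absorbs the one-time setup of the state space/transition structure. I would present this as: build the chain ($\mathcal{O}(g+$ work proportional to the number of states and edges$)$), then iterate ($\mathcal{O}(Tm\binom{m+g-d}{g})$).

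\textbf{Step 4 (Property~\ref{eq:sandwich_proba}: almost-sure concentration).}
Apply Proposition~\ref{prop:ergodicMarkovMain} exactly as in the sketch of Theorem~\ref{th:main-particular}: both $(\LBExcess{g}(t))_t$ and $(\UBExcess{g}(t))_t$ are irreducible aperiodic on their finite state spaces (irreducibility from the uniform $d$-subset choice, which makes every level-configuration reachable from every other in finitely many steps — another short reachability check), so they are ergodic with unique stationary laws $\boldsymbol{\pi}^{\text{lb},g}$, $\boldsymbol{\pi}^{\text{ub},g}$. Taking $X(t)=\LBExcess{g}(t)$ (augmented if necessary with the one-bit ``did the minimum advance'' flag, as in the sketch) and $r(\cdot)$ that flag, Proposition~\ref{prop:ergodicMarkovMain} gives $\tfrac1T\LBErrorstar{g}{T}\to \ell_g(\infty)$ a.s., and likewise $\tfrac1T\UBErrorstar{g}{T}\to U_g(\infty)$ a.s., where $\ell_g(\infty),U_g(\infty)$ are the corresponding stationary averages (also equal to $\lim_T \ell_g(T),\lim_T U_g(T)$ by bounded convergence). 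Finally couple $\CMSCU$ with both variants on the same hash stream as in Step~2 to get $\LBErrorstar{g}{T}\le e^{*}(T)\le \UBErrorstar{g}{T}$ pathwise; dividing by $T$ and letting $T\to\infty$, the outer terms converge a.s.\ to $\ell_g(\infty)$ and $U_g(\infty)$, so $\mathbf{Pr}(\ell_g(\infty)\le e^{*}(T)/T\le U_g(\infty))\to 1$.

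\textbf{Main obstacle.}
The crux is Step~2: making the pathwise coupling rigorous requires that conservative update — and its two capped modifications — be monotone with respect to the right partial order on counter vectors (or on $\boldsymbol{\Delta}$), and CU is notoriously not monotone under arbitrary coordinatewise perturbations. I expect the correct induction hypothesis is not ``$\mathbf Y^{\text{lb},g}(t)\le \mathbf Y^{\text{lb},g+1}(t)$ coordinatewise'' but a structured statement (e.g. the two vectors agree except that the larger-cap one carries extra ``excess'' mass at high levels, together with a nesting of the sets of minimal counters), and verifying this is preserved by a uniform $d$-subset update is the delicate case analysis of the proof; the reachability lemmas in Steps~3–4 ($\Delta_0\ge m-d$ always, and irreducibility) are routine by comparison.
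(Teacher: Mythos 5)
Your overall architecture (cap inactive for $g\ge T$; pathwise coupling and monotonicity in $g$; finite $\boldsymbol{\Delta}$-chain with sparse transitions; ergodic theorem for the limits) is the same as the paper's. But there is a genuine error at the point you yourself flag as the load-bearing reduction: the identity $e^{*}(T)=T-\CBFCUElement{\min}{T}$ is false. By the definition in Section~\ref{sec:problem}, the error of an item $j$ absent from the stream is $\hat n_j(T)=\min_{u\in\bm h(j)}Y_u(T)$, the minimum of the counters over a \emph{fresh uniform $d$-subset}; it is an increasing function of the counter vector, not $T$ minus the global minimum. This sign/functional error propagates: in Step~2, with your identity, $Y^{\text{lb},g}_{\min}$ non-decreasing in $g$ would make $\ell_g(T)$ \emph{decreasing} in $g$, contradicting Property~\ref{eq:sandwich_expectation} (your sentence ``decreasing-in-$Y_{\min}$ hence increasing in $g$'' does not follow); the correct route is that $\min_{u\in\bm h(j)}(\cdot)$ is monotone in $\mathbf Y$, so $\LBCBFCU{g}{T}\le\LBCBFCU{g+1}{T}\le\CBFCU{T}$ directly gives $\LBErrorstar{g}{T}\le\LBErrorstar{g+1}{T}\le e^{*}(T)$. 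In Steps~3--4 the same error makes you accumulate the wrong per-step reward: the increment of $e^{*}$ is the indicator that \emph{all minimal counters of $\bm h(j)$} are incremented (the paper encodes its conditional expectation given the transition as $\LBProbaError{g}{k}{k'}$, averaged over the fresh subset $\bm h(j)$), not ``$1$ minus the indicator that the global minimum advanced.'' Tracking only $Y_{\min}$ computes the growth rate of the minimum counter, which is a different quantity from the error rate.

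Two smaller points. First, your worry in the ``main obstacle'' paragraph is resolved in the paper more simply than you anticipate: plain coordinatewise domination $\LBCBFCU{g}{t}\le\LBCBFCU{g+1}{t}$ \emph{is} the induction hypothesis, and it closes because the auxiliary recursion lemma (Lemma~\ref{lem:RecursionDominationCounting}) only requires checking that an increment propagates at a coordinate $u$ where the two vectors are \emph{equal} — there $u$ is automatically a minimizer for the larger-cap chain as well, and a short three-case analysis of the cap condition finishes it; no ``majorization up to the cap'' strengthening is needed. Second, your state-space count relies on ``any reachable state has $\Delta_0\ge m-d$,'' which is false (e.g.\ $m=5$, $d=2$: after two steps one can reach $\Delta_0=1$, $\Delta_1=4$). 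The correct reachability constraints are $k_0\ge1$ and $k_{L(\bm k)}\ge d$ (a new top level is always created by exactly $d$ counters), and it is these that yield $|\Omega_g|=\binom{m+g-d}{g}$ by stars and bars.
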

\begin{proof}[Sketch proof of Properties~\ref{eq:stationary} and~\ref{eq:sandwich_expectation} in Th.~\ref{th:main}]

For any stream of length~$T$ and any value $g\geq T$, $\LBCU$, $\UBCU$, and $\CMSCU$ produce identical counter vectors, i.e., $\LBCBFCU{g}{t}=\UBCBFCU{g}{t}=\CBFCU{t}$, for any $t\in \aIntInterval{T}$. Property~\ref{eq:stationary} is an immediate result of this observation. Moreover, the proof of Property~\ref{eq:sandwich_expectation} builds on Lemma~\ref{lem:monotony}. 
\begin{lemma}
    \label{lem:monotony}
    For any stream $(r_t)_{t\geq0}$, $(\LBCBFCU{g}{t})_{g\geq 0}$ (resp. $(\UBCBFCU{g}{t})_{g\geq 0}$) is an element-wise non-decreasing (resp. non-increasing) sequence. 
\end{lemma}

Lemma~\ref{lem:monotony} (proof in Appendix~\ref{proof:monothony}) implies that~$ \LBCBFCU{g}{T} \leq \CBFCU{T} \leq \UBCBFCU{g}{T}$ for any~$g\in \mathbb{N}$. In particular, this inequality is valid for the errors of the items that did not appear in the stream before time~$T$. Moreover, for any~$g<T$, it is easy to find a sequence~$(\bm{h}(r_t))_{t\geq 0}$ such that~$\LBErrorstar{g}{T} < \LBErrorstar{g+1}{T}$ and~$\UBErrorstar{g+1}{T} < \UBErrorstar{g}{T}$, thus~$\E{\LBErrorstar{g}{T}}$ is strictly smaller than~$\E{\LBErrorstar{g+1}{T}}$ and $\E{\UBErrorstar{g+1}{T}}$ is strictly smaller than~$\E{\UBErrorstar{g}{T}}$. It follows that~$(\ell_g(T))_g$ (resp.~$(U_g(T))_g$) is a decreasing (resp. increasing) sequence, proving Property~\ref{eq:sandwich_expectation}.

\end{proof}

\begin{proof}[Sketch proof of Properties~\ref{eq:complexity} and~\ref{eq:sandwich_proba} in Th.~\ref{th:main}]
We apply Proposition~\ref{prop:ergodicMarkovMain} and its finite-time variant, presented in Lemma~\ref{lem:finiteExpectationErgodic}.
\begin{lemma}\label{lem:finiteExpectationErgodic}
  Let $(X(t))_t$ be a Markov chain with a specific initial state, and define $\Pi_f(t) = \mathbb{P}(X(t) = f)$ for any $f\in \mathbb{N}$. Let $r$ be a bounded function on the state space. Then,
            \begin{align}
              \E{\sum_{t=1}^{T} r(X_t)} = \sum_{t=1}^{T} \sum_{f=0}^{\infty} r(f) \Pi_f(t).
            \end{align}
\end{lemma}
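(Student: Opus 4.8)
The plan is to reduce the identity to linearity of expectation together with the definition of expectation for discrete random variables; the Markov structure is not used, and only the boundedness of $r$ is needed to cope with a countably infinite state space.

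First I would fix a single time step $t \in \aIntInterval{T}$. Since $X(t)$ takes values in $\mathbb{N}$, the definition of expectation gives $\E{r(X_t)} = \sum_{f=0}^{\infty} r(f)\,\Proba{X(t)=f} = \sum_{f=0}^{\infty} r(f)\,\Pi_f(t)$. Because $r$ is bounded, say $|r(f)| \le R$ for all $f$, and $\sum_{f=0}^{\infty}\Pi_f(t) = 1$, the series converges absolutely; hence $\E{r(X_t)}$ is finite and the expression is unambiguous. Note this single-step step is exactly where one truncates the infinite sum in practice: for the capped-gap chains $(\LBExcess{g}(t))_t$ and $(\UBExcess{g}(t))_t$ only finitely many states are reachable, so the inner sum over $f$ is effectively finite.

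Next I would sum over $t$ by applying linearity of expectation to the finite sum $\sum_{t=1}^{T} r(X_t)$, obtaining $\E{\sum_{t=1}^{T} r(X_t)} = \sum_{t=1}^{T}\E{r(X_t)} = \sum_{t=1}^{T}\sum_{f=0}^{\infty} r(f)\,\Pi_f(t)$, which is precisely the claimed identity. No reordering of summations is actually required; but if one prefers the order $\sum_f\sum_t$, it is justified by Tonelli's theorem for series since $\sum_{t=1}^{T}\sum_{f=0}^{\infty}|r(f)|\,\Pi_f(t) \le R\,T < \infty$.

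The only point needing care is this absolute-convergence/finiteness check, and it follows immediately from the hypotheses, so there is no genuine obstacle. This lemma is the finite-horizon counterpart of Proposition~\ref{prop:ergodicMarkovMain}, and it is the version invoked to establish Properties~\ref{eq:complexity} and~\ref{eq:sandwich_proba} of Theorem~\ref{th:main}: the distribution $(\Pi_f(t))_f$ is propagated one step at a time via a sparse matrix–vector product, and summing $r(f)\,\Pi_f(t)$ over the finitely many reachable states and over $t \in \aIntInterval{T}$ yields $\ell_g(T)$ and $U_g(T)$ at the stated cost.
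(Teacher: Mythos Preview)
Your argument is correct: the identity is nothing more than linearity of expectation combined with the definition of the expectation of a discrete random variable, and boundedness of $r$ handles the countable state space. The paper does not supply a separate proof of this lemma---it is simply stated as the finite-time analogue of Proposition~\ref{prop:ergodicMarkovMain}---so your justification is precisely the intended (and essentially only) one.
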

The computation of~$\ell_{g}(T)$ sets~$X(t)$ to $(\delta_t^{\text{lb},g}, \LBExcess{g}(t), \LBExcess{g}(t-1))$ and $r(X(t))$ to $\delta_t^{\text{lb},g}$, where $\delta_t^{\text{lb},g} = \LBErrorstar{g}{t} - \LBErrorstar{g}{t-1}$. Let~$\Omega_g$ be the state space of $(\LBExcess{g}(t))_t$, for any $\bm{k},\bm{k^{'}}\in \Omega_g$, we have that,
 \begin{align}\label{e:homenow}
            \Pi_{1,\bm{k^{'}},\bm{k}}(t) = \LBOccupancyEle{g}{k}(t-1) \cdot \LBProbaError{g}{k}{k^{'}} \cdot \LBProbaTrans{g}{k}{k^{'}}. 
\end{align}
where $\LBOccupancyEle{g}{k}(t)$ represents the probability that $\LBExcess{g}(t)=\mathbf{k}$, $\LBProbaTrans{g}{k}{k^{'}}$ is the one-step transition probability of $(\LBExcess{g}(t))_t$, and~$\LBProbaError{g}{k}{k^{'}}$ is the conditional expectation of~$\delta_{t}^{\text{lb},g}$ given that~$(\LBExcess{g}(t), \LBExcess{g}(t-1)) = (\mathbf{k^{'}},\mathbf{k})$. Combining Lemma~\ref{lem:finiteExpectationErgodic} and \eqref{e:homenow}, $\ell_g(T)$ can computed as, 
\begin{align}\label{e:ExpErrorLBg}
            \ell_g(T) = \frac{1}{T}\left\langle  \sum_{t=0}^{T-1}\LBOccupancy{g}(t) ,  \LBMatrixTrans{g} \odot \LBMatrixError{g} \cdot \mathbf{1}^\intercal \right\rangle,
\end{align}
where $\odot$ denotes the element-wise product, $\mathbf{1}^\intercal$ represents a column vector with $|\Omega_g|$ rows, each entry being $1$, $\LBOccupancy{g}(t)$ is a vector with components $\LBOccupancyEle{g}{k}(t)$, and $\LBMatrixTrans{g}$ and $\LBMatrixError{g}$ are matrices with components $\LBProbaTrans{g}{k}{k^{'}}$ and $\LBProbaError{g}{k}{k^{'}}$ for any $\bm{k}$ and $\bm{k^{'}}$ in $\Omega_g$.

We establish that: 1) $|\Omega_g|= \binom{m+g-d}{d}$, 2) $\LBMatrixTrans{g}$ is sparse with $\mathcal{O}(m\binom{m+g-d}{g})$ non-zero entries, 3) $(\LBExcess{g}(t))_t$ has a limiting distribution, and 4) Computing $\LBMatrixTrans{g} \odot \LBMatrixError{g} \cdot \mathbf{1}^\intercal$ has a time complexity of $\mathcal{O}( mg \binom{m+g-d}{g})$. Consequently, computing $\ell_{g}(T)$ using \eqref{e:ExpErrorLBg} incurs a time complexity of $\mathcal{O}( (T+g)m \binom{m+g-d}{g})$. Moreover, Proposition~\ref{prop:ergodicMarkovMain} and Equation~\eqref{e:homenow} imply that,
 \begin{align}\label{e:LBAlmostSureConvergence}
            \Proba{\lim_{T\to +\infty} \nicefrac{\LBErrorstar{g}{T}}{T}  = \left\langle  \LBOccupancy{g}(\infty) ,  \LBMatrixTrans{g} \odot \LBMatrixError{g} \cdot \mathbf{1}^\intercal \right\rangle} = 1.
  \end{align}
Similar results hold for~$\UBCU$ following the same reasoning. The detailed proof is presented in Appendix~\ref{proof:dynamicsExcessLBUBprocesses}.
\end{proof}

\paragraph*{Accuracy of the bounds}
In the particular case where $d = m - 1$ and $g = 1$, both Markov chains $(\LBExcess{g}(t))_t$ and $(\LBExcess{g}(t))_t$ reduce to systems with only two states. Lemma~\ref{lem:lInftyUinfty} leverages this simplification to determine $\ell_1(\infty)$ and $U_{1}(\infty)$.
\begin{lemma}\label{lem:lInftyUinfty}
 When~$d=m-1$,~$\ell_{1}(\infty)$ and~$U_{1}(\infty)$ have closed form formulas:

 \( \ell_{1}(\infty) = \frac{m-1}{2m-1} ,    \; U_{1}(\infty) = \frac{m}{2m-1}. \)
\end{lemma}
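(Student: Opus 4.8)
\textbf{Proof plan for Lemma~\ref{lem:lInftyUinfty}.}

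The plan is to set up the two-state Markov chains for $(\LBExcess{1}(t))_t$ and $(\UBExcess{1}(t))_t$ when $d=m-1$ and $g=1$, compute their stationary distributions explicitly, and then plug these into the formula~\eqref{e:LBAlmostSureConvergence} (and its $\UBCU$ analogue) to obtain $\ell_1(\infty)$ and $U_1(\infty)$. When $g=1$, the gap is capped at $1$, so every counter is either at the minimum value or one above it; hence the state $\LBExcess{1}(t)$ is fully determined by a single integer, namely the number $j$ of counters strictly above the minimum, with $0\le j\le m$. In fact, because $d=m-1$ one selects all but one counter at each step, so I expect the reachable states to be exactly two: one "flat" state ($j=0$) and one "gap" state. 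The first task is to identify precisely which two states these are for each of $\LBCU$ and $\UBCU$ and to write down the $2\times 2$ transition matrices $\LBMatrixTrans{1}$ and $\UBMatrixTrans{1}$ by a direct case analysis of the update rule on a uniformly random $(m-1)$-subset.

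The second task is to solve for the stationary probabilities, which for a two-state chain is immediate: if the chain moves from the flat state to the gap state with probability $a$ and back with probability $b$, then the stationary mass is $(b/(a+b),\, a/(a+b))$. Here the transition probabilities should be simple rationals in $m$: missing the (single) minimum counter in an $(m-1)$-subset happens with probability $\tfrac{1}{m}$ when there is exactly one minimum, etc., so I anticipate $a$ and $b$ of the form $\tfrac{c}{m}$. The third task is to assemble the error-increment weights $\LBMatrixError{1}$ (resp. $\UBMatrixError{1}$): $\beta$ is the conditional expected increment $\E{\delta_t}$ of the error of an absent item given the transition taken. The error of an absent item equals $\CBFCUElement{\min}{t}$ (the minimum counter), so $\delta_t$ is $1$ exactly when the minimum counter increases on that step and $0$ otherwise; thus $\LBMatrixError{1}$ has $0/1$ entries determined by whether the step raises the minimum. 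Combining the stationary vector with $\LBMatrixTrans{1}\odot\LBMatrixError{1}\cdot\mathbf 1^\intercal$ via~\eqref{e:LBAlmostSureConvergence} then yields $\ell_1(\infty)$; doing the same for $\UBCU$ yields $U_1(\infty)$, and the arithmetic should collapse to $\tfrac{m-1}{2m-1}$ and $\tfrac{m}{2m-1}$ respectively.

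The main obstacle is bookkeeping rather than depth: I must be careful about the state description in the borderline configurations and about how the special rules on lines~\ref{line:LBNoUpdate} and~\ref{line:UbMinUpdate} alter the transition and the minimum-counter increment compared with the plain conservative update~\eqref{e:DynamicsCU}. In particular, in the gap state with $d=m-1$, whether a step triggers the capped rule depends on whether the single unselected counter is the unique minimum; tracking this correctly (and its effect on $\Delta$ and on whether $\CBFCUElement{\min}{}$ rises) is where an error could slip in. Once the two $2\times2$ matrices and the two stationary vectors are written down correctly, the rest is a short rational-function computation verifying the claimed closed forms.
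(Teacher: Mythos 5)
Your overall plan---identify the two reachable states of $(\LBExcess{1}(t))_t$ and $(\UBExcess{1}(t))_t$ for $d=m-1$, solve the $2\times 2$ stationary equations, and contract against the transition-weighted error increments---is exactly how the result follows from the machinery of Theorem~\ref{th:main}, and your transition probabilities ($a=1$ from the flat state, $b=\tfrac{m-1}{m}$ back from the gap state, hence $\pi=(\tfrac{m-1}{2m-1},\tfrac{m}{2m-1})$) are right.

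The one genuine flaw is your treatment of the error increment. You assert that the error of an absent item $j$ equals the global minimum counter $Y_{\min}(t)$, so that $\delta_t$ is a deterministic $0/1$ function of the transition. That identification is false: the estimate is $\min_{u\in \bm{h}(j)} Y_u(t)$, and with $d=m-1$ the set $\bm{h}(j)$ misses exactly one counter, so in the gap state it misses the unique minimum with probability $\tfrac{1}{m}$ and the estimate is $Y_{\min}(t)+1$. Accordingly the entries of $\LBMatrixError{1}$ are conditional probabilities over the hash of the absent item, not indicators: e.g.\ on the flat-to-gap transition the error increases with probability $\tfrac{1}{m}$ (the absent item must hash to precisely the $m-1$ incremented counters), and on the gap-to-flat transition with probability $\tfrac{m-1}{m}$; for \UBCU{} the capped transition has $\beta=1$. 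Your $0/1$ bookkeeping nevertheless lands on the correct values because $|e^{*}(T)-Y_{\min}(T)|\leq G(T)\leq 1$, so $\lim_T Y_{\min}(T)/T$ and $\lim_T e^{*}(T)/T$ coincide almost surely---but this bounded-difference step is doing real work and must be stated, or else you should use the correct $\beta$ entries (which give $\tfrac{m-1}{2m-1}\cdot\tfrac{1}{m}+\tfrac{m}{2m-1}\cdot\tfrac{m-1}{m}\cdot\tfrac{m-1}{m}=\tfrac{m-1}{2m-1}$ for $\ell_1(\infty)$, and add $\tfrac{m}{2m-1}\cdot\tfrac{1}{m}\cdot 1$ for $U_1(\infty)$). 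With either fix the argument is complete and matches the paper's intended computation.
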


Lemma~\ref{lem:lInftyUinfty}, combined with Theorems~\ref{th:main} and~\ref{th:main-particular}, shows that the interval $[\ell_{1}(\infty),U_1(\infty)]$ converges to $\{\nicefrac{1}{2}\}$, as $m\to \infty$, which coincides with the value of~$\nicefrac{e^{*}(t)}{t}$ when $t\to \infty$. Consequently, for large values of~$m$ when $d=m-1$, our lower and upper bounds are accurate even for~$g=1$.

Numerical computation of $\ell_g(T)$ involves computing the matrices~$\LBMatrixError{g}$ and~$\LBMatrixTrans{g}$ (see~\eqref{e:ExpErrorLBg} and~\eqref{e:LBAlmostSureConvergence}). Similarly, for $U_g(T)$, we need to compute the matrices~$\UBMatrixError{g}$ and~$\UBMatrixTrans{g}$. All these matrices' expressions are derived in Appendix~\ref{proof:dynamicsExcessLBUBprocesses}. These matrices exhibit sparsity. For $\LBMatrixError{g}$ and $\LBMatrixTrans{g}$, the non-zero entries correspond to transitions from state~$\mathbf{k}$ to state~$\mathbf{k{'}}$ only if~$\mathbf{k{'}}$ is obtained from~$\bm{k}$ through the function~$\LBGamma(\mathbf{k}, v, c)$, where $k_v>0$ and $c\in \aIntInterval{\min(k_v,d)}$. Similarly, for~$\UBMatrixError{g}$ and~$\UBMatrixTrans{g}$, the non-zero entries follow the function~$\UBGamma(\mathbf{k}, v, c)$ with the same conditions. The mappings $\LBGamma$ and $\UBGamma$ are defined as:
    \begin{align}
        &\UBGammaEle(\mathbf{k},0,k_0) = \LBGammaEle(\mathbf{k},0,k_0) = \begin{cases}  
            k_0 + k_1, &\text{ if } l=0, \\
            k_{l+1},  & \text{ otherwise.} 
            \end{cases} , \; \LBGamma(\bm{k}, g, d) = \bm{k} ,\; \\ 
         & \forall (v,c) \notin \{ (0,k_0), (g,d) \}, \;\UBGammaEle(\mathbf{k},v,c)= \LBGammaEle(\mathbf{k},v,c)  =\begin{cases}
            k_l - c, & \text{ if }  l= v,   \\ 
            k_l + c, & \text{ if }  l = v + 1, \\ 
            k_l,  & \text{ otherwise.}
            \end{cases} \\ 
         & \forall g > 1, \;   \UBGammaEle(\mathbf{k},g,d)=  \begin{cases}
            &d,  \text{ if }  l= g,  \; k_g - d,  \text{ if }  l = g-1,\\ 
            &k_0 + k_1,  \text{ if } l = 0, \; k_{l+1}  \text{ otherwise.} 
            \end{cases} 
        \end{align}
For the case where~$g=1$, $\Gamma_{0}^{\text{ub}, g}(\mathbf{k},1,d)= m-d$ and $ \Gamma_{1}^{\text{ub}, g}(\mathbf{k},1,d)= d$.
The components of~$\LBMatrixTrans{g}$ and $\UBMatrixTrans{g}$ satisfy, 
\begin{align}
&p^{\text{lb},g}_{\bm{k},  \LBGamma(\mathbf{k}, v, c)} = p^{\text{ub},g}_{\mathbf{k}, \UBGamma(\mathbf{k}, v, c)} =  \frac{\binom{\sum_{l>v} k_l}{d-c}\binom{k_v}{c}}{\binom{m}{d}}.
\end{align}
The components of~$\LBMatrixError{g}$ and $\UBMatrixError{g}$ satisfy, 
\begin{align}
&\forall (v,c)\neq (g,d), \;\beta_{\mathbf{k},\LBGamma(\mathbf{k},v,c)}^{\text{lb},g} = \beta_{\mathbf{k},\UBGamma(\mathbf{k},v,c)}^{\text{ub},g} =  \frac{\binom{\sum_{w>v} k_w + c}{d} - \binom{\sum_{w>v} k_w}{d}} {\binom{m}{d}}, \\ 
&\beta_{\mathbf{k},\LBGamma(\mathbf{k},g,d)}^{\text{lb},g} = 0,\; \beta_{\mathbf{k},\UBGamma(\mathbf{k},g,d)}^{\text{ub},g} =   \frac{1+ \binom{m}{d}-\binom{m-k_0}{d}}{\binom{m}{d}}.
\end{align}

We numerically compute $\ell_g(T)$ and $U_g(T)$ for $m=50$, $d=4$, and $T=250$. Table~\ref{tab:numericalResults} presents the results for multiple values of $g$, along with the execution time in seconds. 
\begin{table}[htbp]
    \centering
    \caption{Results for different values of $g$ with $m=50$, $d=4$, and $T=250$.}
    \label{tab:results}
    \begin{tabular}{l|l|l|l|l|l}
    \toprule
    $g$ & 1 & 2 & 3 & 4 & 5 \\
    \midrule
    $\ell_g(T)$ & 0.01860 & 0.02956 & 0.03420 & 0.03540 & 0.03559\\ 
    \midrule
    $U_g(T)$  & 0.07654 & 0.04090 & 0.03637 & 0.03572 & 0.03562 \\ 
    \midrule
    Execution time in (s) & 0.05 & 0.68 & 17 & 257 & 3241 \\
    \bottomrule
    \end{tabular}
    \label{tab:numericalResults}
\end{table}




\section{Conclusion}
\label{sec:conclusion}

Our paper analyzed $\CMSCU$ with uniform counters selection, a setting we argued constitutes a worst-error scenario. Specifically, in the instance where $d=m-1$, we established closed-form expressions for key metrics within $\CMSCU$. Furthermore, for any given values of $m$ and $d$, we introduced a spectrum of lower and upper bounds on the estimation error, incorporating a parameter~$g$ that enables a flexible balance between computational time and accuracy.  

In future work, we aim to formally prove that $\CMSCU$ with uniform counters selection maximizes the error. Moreover, we plan to identify the appropriate selection of $g$ to attain a desired level of accuracy in error estimation for $\CMSCU$.

\bibliography{ref}

\appendix

\section{Useful Lemma}
\label{app:usefulLemma}
For any two vectors~$\mathbf{X}$ and~$\mathbf{Y}$ of equal length. We denote $\mathbf{X}\leq \mathbf{Y}$ to signify that for each index $u$, $X_u \leq Y_u$. The following lemma will aid in proving relevant statements.

\begin{lemma}\label{lem:RecursionDominationCounting}

Let~$(\mathbf{a}(t)= (a_{u}(t))_{u\in \aIntInterval{m}})_{t\in \mathbb{N}}$ and~$(\mathbf{b}(t)= (b_{u}(t))_{u\in \aIntInterval{m}})_{t\in \mathbb{N}}$ be sequences in $\mathbb{N}^{m}$ such that, for any $u\in \aIntInterval{m}$, $a_{u}(t)$ and~$b_u(t)$ are increasing sequences and satisfies: $\mathbf{a}(t+1)\leq \mathbf{a}(t)+ \mathbf{1}$, $\mathbf{b}(t+1)\leq \mathbf{b}(t)+ \mathbf{1}$, and $\mathbf{a}(0)\leq \mathbf{b}(0)$. If the following holds for any $t$ and $u$, 
    \begin{align}\label{e:lem-1}
          \forall j\neq u,\; a_j(t)\leq b_j(t),\;  a_u(t)=b_u(t), \; a_u(t+1) = a_u(t) + 1 \implies  b_u(t+1) = b_u(t) + 1,
    \end{align}
then $\mathbf{a}(t)\leq \mathbf{b}(t)$ for any value of~$t$.
\end{lemma}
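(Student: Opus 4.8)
The statement is a comparison lemma for pairs of integer-valued, coordinate-wise increasing sequences whose coordinates increase by at most $1$ per step, and the plan is a straightforward induction on $t$. The base case $t=0$ is exactly the hypothesis $\mathbf{a}(0)\leq \mathbf{b}(0)$. For the inductive step, I assume $\mathbf{a}(t)\leq \mathbf{b}(t)$ and show $\mathbf{a}(t+1)\leq \mathbf{b}(t+1)$ coordinate by coordinate, i.e. I fix an index $u$ and prove $a_u(t+1)\leq b_u(t+1)$.

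The key observation is the following case split on the slack $b_u(t)-a_u(t)$. If $b_u(t)-a_u(t)\geq 1$, then since $a_u(t+1)\leq a_u(t)+1$ (given) and $b_u(t+1)\geq b_u(t)$ (monotonicity), we get $a_u(t+1)\leq a_u(t)+1\leq b_u(t)\leq b_u(t+1)$, and we are done with no further input. The only remaining case is $b_u(t)-a_u(t)=0$, i.e. $a_u(t)=b_u(t)$ (equality, since the induction hypothesis forbids $a_u(t)>b_u(t)$). Here, if $a_u(t+1)=a_u(t)$ then trivially $a_u(t+1)=b_u(t)\leq b_u(t+1)$. If instead $a_u(t+1)=a_u(t)+1$, then I invoke hypothesis~\eqref{e:lem-1}: its antecedent requires $a_j(t)\leq b_j(t)$ for all $j\neq u$ (true by the induction hypothesis), $a_u(t)=b_u(t)$ (the case we are in), and $a_u(t+1)=a_u(t)+1$ (the subcase we are in); hence the conclusion $b_u(t+1)=b_u(t)+1=a_u(t)+1=a_u(t+1)$ holds, giving equality. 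In all cases $a_u(t+1)\leq b_u(t+1)$, and since $u$ was arbitrary, $\mathbf{a}(t+1)\leq \mathbf{b}(t+1)$, closing the induction.

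I do not anticipate a genuine obstacle here; the only point requiring a little care is making sure the antecedent of~\eqref{e:lem-1} is fully verified before applying it — in particular that the "$a_j(t)\leq b_j(t)$ for all $j\neq u$" clause comes from the induction hypothesis $\mathbf{a}(t)\leq\mathbf{b}(t)$, and that the hypothesis is only needed (and only applicable) in the tight case $a_u(t)=b_u(t)$ with a genuine increment. The increments-by-at-most-one assumptions $\mathbf{a}(t+1)\leq\mathbf{a}(t)+\mathbf{1}$ and $\mathbf{b}(t+1)\leq\mathbf{b}(t)+\mathbf{1}$ are exactly what makes one unit of slack sufficient to absorb a step, so they are used precisely in the first case. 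The monotonicity of $b_u$ is what is used to pass from $b_u(t)$ to $b_u(t+1)$ in the slack case.
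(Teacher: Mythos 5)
Your proof is correct and follows essentially the same route as the paper's: induction on $t$, with the same case split between the slack case $a_u(t)<b_u(t)$ (absorbed by the increment-by-at-most-one hypothesis) and the tight case $a_u(t)=b_u(t)$ with a genuine increment, where \eqref{e:lem-1} is invoked. Your write-up is merely more explicit about verifying the antecedent of \eqref{e:lem-1}, which is a point the paper's proof leaves implicit.
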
 
\begin{proof}
    The proof uses recursion. At~$t=0$,~$a_j(0)\leq b_j(0)$, $\forall j\in \aIntInterval{m}$. For $t \geq 1$, we assume that $a_j(t)\leq b_j(t)$ for any $j$. Let~$u\in \aIntInterval{U}$, Whenever~$a_u(t)<b_u(t)$ or~$a_u(t+1) = a_u(t)$, we have $a_u(t+1)\leq b_u(t+1)$. Otherwise, $a_u(t)=b_u(t)$ and $\; a_u(t+1) = a_u(t) + 1$, and using~\eqref{e:lem-1}, we conclude that~$a_u(t+1) \leq b_u(t+1)$, thereby completing the proof.
\end{proof}

\section{Proof of Proposition~\ref{prop:WorstErrorItemNotReq}.}
\label{proof:propositionWorstErrorItem}

    Let~$\mathcal{H}$ be the set of~$d$ hash functions mapping~$\mathcal{I}$ to $\binom{\aIntInterval{m}}{d}$. Let~$i$ and~$x$ be items in~$\mathcal{I}$ such that $n_{x}(t)=0$. Let~$\boldsymbol{\psi}$ be a function from~$\mathcal{H}$ to itself such that for any~$\mathbf{h}= (h_l)_{l\in \aIntInterval{d}}\in \mathcal{H}$, $\mathbf{g}=\boldsymbol{\psi}(\mathbf{h})$ is defined as, 
    \begin{align}
                g_{l}(n) \triangleq \begin{cases}
                    h_l(n),  &\text{ if } n \notin \{i,x \}, \\ 
                    h_l(i), & \text{ if } n = x, \\ 
                    h_l(x), & \text{ otherwise.}
                \end{cases}
    \end{align}
Observe that $\boldsymbol{\psi}$ is a bijection and therefore we can compute the difference between the expected error of items~$i$ and~$x$ as follows, 
    \begin{align}
         \E{e_{i}(T)} - \E{e_{x}(T)} = \frac{1}{\binom{m}{d}} \sum_{\mathbf{h} \in \mathcal{H}} e_{i}(\mathbf{h},t) -e_{x}(\mathbf{g}, T),
    \end{align}
where $e_{i}(\bm{h},t)$ is the error of item $i$ when the hash functions $\bm{h}$ are used. Next, we prove that any~$\mathbf{h} \in \mathcal{H}$ satisfies
        \begin{align}\label{e:prop1-firstStep}
                e_{i}(\mathbf{h}, t) - e_{x}(\mathbf{g}, t) \leq 0. 
        \end{align}
For this purpose, we define a new stochastic process~$(\mathbf{A}(\mathbf{h}, t))_t$ as follows: $\bm{A}(\bm{h},0)=\bm{0}$ and for any $t\geq 0$, 
\begin{align}
       A_{c}(t+1) =
        \begin{cases} 
                    A_c(t) + 1, &\text{if } c \in \argmin_{u\in \bm{h}(r_t)}
                    A_{u}(t), \text{ and } r_t \neq i \\
                    A_{c}(t), & \text{otherwise.}
        \end{cases}
    \end{align}
Observe that $\mathbf{A}(\mathbf{h}, t)=\mathbf{A}(\mathbf{g}, t)$. We prove using recursion that, 
        \begin{align}\label{e:prop1-1}
             Y_u(\mathbf{h},t)  -n_{i}(t)    \leq     A_{u}(\mathbf{h}, t) \leq    Y_u(\mathbf{h},t), \; \forall t. 
        \end{align}
The same result holds when replacing~$\mathbf{h}$ by~$\mathbf{g}$. We deduce form~\eqref{e:prop1-1} that, 
\begin{align}
        &\min_{l\in \aIntInterval{d}} A_{g_l(x)} (\mathbf{g}, t) \leq \min_{l\in \aIntInterval{d}} Y_{g_l(x)} (\mathbf{g}, t) = e_x(\mathbf{g},t), \\
       &\min_{l\in \aIntInterval{d}} A_{h_l(i)} (\mathbf{h}, t) \geq  \min_{l\in \aIntInterval{d}} Y_{h_l(i)} (\mathbf{h}, t) - n_i(t) = e_i(\mathbf{h},t).
\end{align}
By definition of~$\mathbf{g}$, $g_l(x)=h_l(i)$ and therefore~$A_{g_l(x)} (\mathbf{g}, t)=  A_{h_l(i)} (\mathbf{h}, t)$, proving~\eqref{e:prop1-firstStep} and concluding the proof. We use Lemma~\ref{lem:RecursionDominationCounting} in Appendix~\ref{app:usefulLemma} to prove~\eqref{e:prop1-1}. 

\paragraph*{RHS of \eqref{e:prop1-1}.} Suppose that~ $\bm{A}(\bm{h},t)\leq \bm{Y}(\mathbf{h}, t)$, $A_{u}(\mathbf{h},t) = Y_{u}(\mathbf{h}, t)$ and~$A_{u}(\mathbf{h},t+1) = A_{u}(\mathbf{h}, t)+1$. The counter~$u$ was incremented in~$\bm{A}(\bm{h},t)$ means that $\min_{j\in \bm{h}(r_t)} A_{j}(\mathbf{h}, t)=A_{u}(\mathbf{h}, t)$. We can then write, $\min_{j\in \bm{h}(r_t)} Y_{j}(\mathbf{h}, t)\geq \min_{j\in \bm{h}(r_t)} A_{j}(\mathbf{h}, t)=A_{u}(\mathbf{h}, t)=Y_{u}(\mathbf{h}, t)$. Therefore,  $Y_{u}(\mathbf{h}, t+1)=Y_{u}(\mathbf{h}, t)+1$ and Lemma~\ref{lem:RecursionDominationCounting} proves the RHS. 

\paragraph*{LHS of \eqref{e:prop1-1}.} Suppose that~ $\bm{Y}(\bm{h},t)\leq \bm{A}(\bm{h},t)+ \bm{n}_{i}(t)$, $Y_{u}(\mathbf{h}, t)= A_{u}(\mathbf{h},t)  + n_{i}(t)$ and~$Y_{u}(\mathbf{h}, t+1)= Y_{u}(\mathbf{h}, t) + 1$. It follows that $\exists c \in \bm{h}(r_t)$ such that, $Y_{c}(\mathbf{h}, t)\geq Y_{u}(\mathbf{h}, t)= A_{u}(\mathbf{h},t)  + n_{i}(t)$. Moreover, from the recursion hypothesis, we have that~$Y_{c}(\mathbf{h}, t)\leq A_{c}(\mathbf{h},t)  + n_{i}(t)$ and therefore, $A_c(\mathbf{h},t)\geq A_u(\mathbf{h},t)$. Whether $r_t=i$ or not, $A_{u}(\mathbf{h},t+1)  + n_{i}(t+1)=A_{u}(\mathbf{h},t)  + n_{i}(t)+1$, which proves the LHS.

\section{Proof of Theorem~\ref{th:main-particular}}
\label{proof:th-main-particular}

To compute the estimation error of $\CMSCU$ in the regime where~$d=m-1$, we study the stochastic evolution of the number of counters far from the minimum value counter by any value~$l$, denoted~$\Delta_{l}(t)$. Formally,
\begin{align}
        \Delta_l(t) \triangleq \sum_{j=1}^{m} \mathds{1}\left( Y_{j}(t) - Y_{\min}(t) = l   \right).
\end{align}
The stochastic process~$(\boldsymbol{\Delta}(t)=(\Delta_l(t))_{l\in \mathbb{N}})_t$ is discrete-time Markov chain. 

\paragraph*{State space.}  By definition, $\Delta_{0}(t)\geq 1$ and $\sum_{l=0}^{+\infty} \Delta_{l}(t)= m$ at any step~$t$. Moreover, any counter~$j$ attaining the maximum value at step~$t$, i.e., $Y_{j}(t)=Y_{\max}(t)$, is only incremented if it is selected along with counters of the same value. Therefore, $\Delta_{G(t)}(t)\geq d=m-1$, where~$G(t)$ denotes the gap between the maximum and minimum counter values, namely,~$G(t)=Y_{\max}(t) - Y_{\min}(t)$. The state space of $(\Excess(t))_t$, denoted $\Omega$, can be expressed as,
    \begin{align}
        \Omega = \left \{\bm{k} \in \mathbb{N}^{\mathbb{N}}: \sum_{f=0}^{+\infty} k_f = m \text{ and }\; \left(k_0= m \text{ or } \left( k_0 = 1 , k_{L(\bm{k})} = m-1 \right) \right)\right\}.  
    \end{align}
This expression of~$\Omega$ shows that the function $L: \Omega \mapsto \mathbb{N}$, where $L(\bm{k})$ denotes the largest index $f \in \mathbb{N}$ such that $k_f > 0$, is bijective. Consequently, we can model $(\Delta(t))_t$ as a discrete-time birth and death Markov chain on~$\mathbb{N}$~\cite[Sec. 6.3, p. 368]{ross2014introduction}. 

\paragraph*{Transition probabilities.}
Let $p_{L(\bm{k}),L(\bm{k^{'}})}$ be the transition probability from state~$\bm{k}$ to~$\bm{k^{'}}$. Initially, $L(\Delta(0))=0$, and with probability (w.p.)~$1$, $L(\Delta(1))=1$. If $L(\Delta(t))= f \geq 1$, then~$L(\Delta(t+1))= f+1$ only if all maximum value counters are selected, which happens w.p.~$1/m$. Otherwise, $L(\Delta(t+1))= f-1$ w.p.~$m-1/m$. We conclude that~$p_{0,1}=1$, and for any~$i\geq 1$, $p_{i,i+1}=1/m$ and $p_{i,i-1}=m-1/m$.
\paragraph*{Limiting distribution.} 
The Markov chain $(\boldsymbol{\Delta}(t))_t$ has a limiting distribution~\cite[Sec. 6.5, p. 387]{ross2014introduction}. The limiting probability of $\Excess(t)=\bm{k}$, denoted $\pi_{L(\bm{k})}$, is expressed as, 

\begin{align}
  \pi_0 = \frac{m-2}{2(m-1)}, \; \forall f\geq 1, \;  \pi_f = \frac{m(m-2)}{2(m-1)^{f+1}}. 
\end{align}

\paragraph*{Ergodicity.} 

We utilize~\cite[Prop. 4.3, p. 215]{ross2014introduction} to prove the almost sure convergence of quantities of interest to constants. For completeness, we present the proposition. 
\begin{proposition}\cite[Prop. 4.3, p. 215]{ross2014introduction}\label{prop:MarkovErgodicAppendix}
  Let $(X(t))_t$ be an irreducible Markov chain with stationary probabilities $\Pi_f$ for any $f\in\mathbb{N}$, and let $r$ be a bounded function on the state space. Then, 
            \begin{align}
              \lim_{T\to \infty} \frac{1}{T} \sum_{t=1}^{T} r(X_t) = \sum_{f=0}^{\infty} r(f) \Pi_f , \text{ w.p. 1}. 
            \end{align}
\end{proposition}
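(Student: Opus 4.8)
The plan is to prove the statement by exploiting the \emph{regenerative} structure of the chain and reducing it to the strong law of large numbers (SLLN) for i.i.d.\ random variables through the renewal--reward theorem. Since $(X(t))_t$ is irreducible and admits the stationary distribution $(\Pi_f)_f$, it is positive recurrent; hence every state is visited infinitely often almost surely, and the return time to any fixed reference state---call it state $0$---has finite mean $\mu = 1/\Pi_0$. First I would argue that one may assume $X(0)=0$ without loss of generality: whatever the initial state, the chain hits state $0$ in an almost surely finite time $T_1$, so the contribution of the pre-regeneration segment to $\frac{1}{T}\sum_{t=1}^{T} r(X_t)$ is $O(1/T)$ and vanishes. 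Defining the successive visit times $0=T_0<T_1<T_2<\cdots$ to state $0$, the strong Markov property makes the excursions between consecutive visits i.i.d.

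Next I would apply the SLLN to the cycle statistics. Writing $\tau_k = T_k - T_{k-1}$ for the cycle lengths (i.i.d., mean $\mu<\infty$) and $R_k = \sum_{t=T_{k-1}+1}^{T_k} r(X_t)$ for the reward accumulated over cycle $k$, boundedness $|r|\le B$ gives $|R_k|\le B\tau_k$, so $(R_k)_k$ is i.i.d.\ and integrable. With $N(T)=\max\{n : T_n\le T\}$ the number of completed cycles by time $T$, the SLLN yields $\frac{1}{n}\sum_{k=1}^{n} R_k \to \E{R_1}$ and $\frac{1}{n}\sum_{k=1}^{n}\tau_k \to \mu$ almost surely, and a standard renewal argument (using $T_{N(T)}\le T<T_{N(T)+1}$ and $N(T)\to\infty$) gives $N(T)/T\to 1/\mu=\Pi_0$ almost surely.

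I would then assemble the time average as
\begin{align}
  \frac{1}{T}\sum_{t=1}^{T} r(X_t) = \frac{N(T)}{T}\cdot\frac{1}{N(T)}\sum_{k=1}^{N(T)} R_k + \frac{1}{T}\sum_{t=T_{N(T)}+1}^{T} r(X_t),
\end{align}
where the first term converges almost surely to $\Pi_0\,\E{R_1}=\E{R_1}/\mu$, and the residual partial cycle is bounded in absolute value by $\frac{B}{T}\bigl(T-T_{N(T)}\bigr)\le \frac{B\,\tau_{N(T)+1}}{T}\to 0$ almost surely. It remains to identify $\E{R_1}/\mu$ with $\sum_{f} r(f)\Pi_f$. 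Taking $r=\I{\cdot=f}$ in the renewal--reward identity shows that the expected number of visits to state $f$ during one cycle equals $\Pi_f\,\mu$; by linearity and dominated convergence (justified by $|r|\le B$ together with $\sum_{f}\Pi_f=1$) one obtains $\E{R_1}=\mu\sum_{f} r(f)\Pi_f$, and dividing by $\mu$ closes the argument.

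The main obstacle is the careful handling of the boundary effects together with the interchange of the infinite state-space summation and the expectation. Concretely, controlling the final incomplete cycle and the initial pre-regeneration segment, and justifying $\E{R_1}=\mu\sum_{f} r(f)\Pi_f$ when the state space is countably infinite, are the delicate steps; both rely crucially on the boundedness of $r$ and the finiteness of the mean return time $\mu$.
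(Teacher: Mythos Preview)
The paper does not supply its own proof of this proposition; it is quoted verbatim from Ross's textbook and used as a black box. So there is no ``paper's approach'' to compare against---your task was effectively to reconstruct a textbook proof, and the regenerative/renewal--reward argument you outline is exactly the standard one (and is, in fact, essentially how Ross proves it).

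Your argument is correct in structure. One point deserves a little more care: in the identification step you write ``taking $r=\I{\cdot=f}$ in the renewal--reward identity shows that the expected number of visits to state $f$ during one cycle equals $\Pi_f\,\mu$.'' As written this risks circularity, since the renewal--reward identity applied to $r=\I{\cdot=f}$ tells you the long-run fraction of time in $f$ equals $\E{R_1^{(f)}}/\mu$, and you then need the separate fact that this fraction is $\Pi_f$. The clean way to close this is to invoke directly the standard representation of the stationary distribution via cycle occupation times, namely $\Pi_f = \Esub{0}{\sum_{t=1}^{\tau_1}\I{X_t=f}}\big/\Esub{0}{\tau_1}$, which is proved independently (by checking the balance equations) and does not rely on the ergodic theorem you are establishing. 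With that in hand, your linearity-plus-dominated-convergence step (justified by $|r|\le B$ and $\sum_f R_1^{(f)}=\tau_1\in L^1$) goes through cleanly.

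The other delicate point you flag---that $\tau_{N(T)+1}/T\to 0$ almost surely---follows from $\tau_n/n\to 0$ a.s.\ (an immediate consequence of the SLLN for i.i.d.\ integrable $\tau_n$) combined with $N(T)/T\to 1/\mu$; you handle this correctly.
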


\noindent \textbf{Error rate.} Let~$\delta_t$ be a Bernoulli random variable, defined as the difference between~$e^{*}(t)$ and~$ e^{*}(t-1)$. We can interpret~$e^{*}(t)$ as the estimation error for an item $j$ such that $n_{j}(t)=0$. We prove that, 
        \begin{align}
        \lim_{T\to \infty} \frac{e^{*}(T) }{T}  = \frac{1}{2}.
        \end{align}
For this purpose, we utilize Proposition~\ref{prop:MarkovErgodicAppendix} with $X(t)=(\delta_t, \Excess(t-1), \Excess(t))$ and $r(X(t))= \delta_t$. We observe that $(X(t))_t$ is an irreducible Markov chain. We denote by~$\Pi_{\delta, L(\bm{k}), L(\bm{k^{'}})}(t)$ the probability that $X(t)$ is at the state $(\delta, \bm{k}, \bm{k^{'}})\in \{0,1\} \times \Omega^{2}$. For any valid values of $f,l\in \mathbb{N}$, this probability can be computed as, 
    \begin{align}
            \Pi_{1, f, l}(t) = \beta_{f, l} \cdot p_{f,l}\cdot \pi_{f}(t). 
    \end{align}
where~$\beta_{f,l}$ is the probability that $\delta_t = 1$ given that $(L(\Excess(t-1)),L(\Excess(t)))= (f,l)$.  
We conclude using Proposition~\ref{prop:MarkovErgodicAppendix} that, 
        \begin{align}
            \lim_{T\to \infty} \frac{e^{*}(T) }{T} 
&= \sum_{\bm{k}, \bm{k^{'}} \in \Omega} \Pi_{1,L(\bm{k}),L(\bm{k^{'}})} 
= \Pi_{1,1,0} + \sum_{f=1}^{\infty} \Pi_{1, f-1 , f}  +  \Pi_{1, f+1 , f} \\
&=  \frac{1}{m} p_{0,1} \pi_0 + \sum_{f=1}^{\infty} \frac{m-1}{m} p_{f,f-1} \pi_f + \frac{1}{m} p_{f,f+1} \pi_f= \frac{1}{2}. 
        \end{align}
\noindent{\textbf{Bounded gap.}} We calculate the fraction of time where the gap is larger than a constant~$g\in \mathbb{N}\setminus \{0\}$. Using Proposition~\ref{prop:MarkovErgodicAppendix} with $X(t)= \Excess(t)$ and $r(X(t))= \mathds{1}(L(\Excess(t)) \geq g)$, we can write, 
    \begin{align}
     \lim_{T\to \infty}  \frac{1}{T}  \sum_{t=1}^{T} \mathds{1}\left( G(t) \geq  g\right)  =\sum_{f=g}^{\infty} \pi_f = \frac{m}{2(m-1)^{g}}. 
    \end{align}
Therefore, the gap is smaller or equal to $1$ with high probability in $m$ and $T$. 




\noindent \textbf{Growth Rate.} We determine the average number of increments per step. Employing Proposition~\ref{prop:MarkovErgodicAppendix}, we establish that this quantity equals $\frac{m}{2}$. Specifically, we set $X(t)= (\Excess(t-1),\Excess(t))$ and  $r(X(t))$ as follows: 
\begin{align*}
r(X(t)) = \begin{cases}
m-1, & \text{if } L(\Excess(t))= L(\Excess(t-1))+1, \\
1, & \text{if } L(\Excess(t))= L(\Excess(t-1))-1, \\
0, & \text{otherwise.}
\end{cases}
\end{align*}
Define $\Pi_{ L(\bm{k}), L(\bm{k^{'}})}(t)$ as the probability that~$X(t)$ is at the state $( \bm{k}, \bm{k^{'}})\in  \Omega^{2}$. Let $C(t)$ be the number of counters incremented at step~$t$. Applying Proposition~\ref{prop:MarkovErgodicAppendix}, we get, 
        \begin{align}
            \lim_{T \to \infty} \frac{1}{T} \sum_{t=1}^{T} C(t) 
&= \sum_{ \bm{k}, \bm{k^{'}}} r( \bm{k}, \bm{k^{'}}) \Pi_{ L(\bm{k}), L(\bm{k^{'}})}  \\ 
&= (m-1)p_{0,1}\pi_0 + \sum_{f=1}^{\infty} (m-1)p_{f,f+1} \pi_f +  (1\cdot p_{f,f+1} \pi_f )\\
&= (m-1) \pi_0 + (1-\pi_0) \frac{2(m-1)}{m}=\frac{m}{2}.
        \end{align}
It follows that the average counter rate is equal to~$\frac{1}{2}$.

\section{Proof of Lemma~\ref{lem:monotony}.}
\label{proof:monothony}

\begin{lemma}\label{lem:LBIncreasing}
$ \LBCBFCU{g}{T} \leq \LBCBFCU{g+1}{T}$.

\end{lemma}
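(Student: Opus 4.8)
The plan is to prove the element-wise inequality $\LBCBFCU{g}{t} \leq \LBCBFCU{g+1}{t}$ by induction on $t$, using the domination criterion of Lemma~\ref{lem:RecursionDominationCounting} from Appendix~\ref{app:usefulLemma}. Set $\mathbf{a}(t) = \LBCBFCU{g}{t}$ and $\mathbf{b}(t) = \LBCBFCU{g+1}{t}$. Both sequences start at $\mathbf{0}$, both are coordinate-wise increasing, and each step increments a counter by at most $1$ (the conservative update and the $\UBCU$-style min-update in $\LBCU$ never add more than $1$ to any single counter). So the structural hypotheses of Lemma~\ref{lem:RecursionDominationCounting} hold, and it remains to verify the implication~\eqref{e:lem-1}: whenever $a_j(t) \leq b_j(t)$ for all $j$, $a_u(t) = b_u(t)$, and counter $u$ gets incremented in the $\LBCU$-with-gap-$g$ process at step $t+1$, then counter $u$ must also be incremented in the $\LBCU$-with-gap-$(g+1)$ process at step $t+1$.

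To verify this implication I would case-split on how counter $u$ came to be incremented in the gap-$g$ process. In the ``otherwise'' branch (line~\ref{line:LBCuUpdate}, a plain conservative update), $u$ is incremented because $u \in \bm{h}(r_{t+1})$ and $a_u(t) = \min_{w \in \bm{h}(r_{t+1})} a_w(t)$. Since $a_u(t) = b_u(t)$ and $a_w(t) \leq b_w(t)$ for all $w$, we get $b_u(t) = a_u(t) \leq a_w(t) \leq b_w(t)$ for every $w \in \bm{h}(r_{t+1})$, so $u$ is a minimizer among the selected counters in the gap-$(g+1)$ state as well; provided the gap-$(g+1)$ process does not fall into its ``no update'' branch, $u$ is incremented. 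But the ``no update'' branch for the gap-$(g+1)$ process requires $G^{\text{lb},g+1}(t) = g+1$ and all selected counters equal to the maximum; if $b_u(t) = a_u(t)$ and the gap-$g$ process did update, a short argument shows this cannot happen in a way that blocks $u$ — I would use $a_u(t) = b_u(t)$ together with $\mathbf{a}(t) \leq \mathbf{b}(t)$ to bound $G^{\text{lb},g+1}(t)$, noting that if all selected counters in $\mathbf{b}(t)$ were at the max and the gap were $g+1$, then since $\mathbf{a}(t) \leq \mathbf{b}(t)$ and $a_u(t) = b_u(t)$, counter $u$ would have value $\geq b_{\max}(t) - 0$... actually the cleanest route is to observe that $u$ being incremented in the gap-$g$ process combined with $a_u(t)=b_u(t)$ forces $b_u(t)$ to be at or below the relevant minimum, ruling out the blocking configuration. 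In the branch where $\LBCU$ incremented $u$ because $u$ was a min-value counter during a ``capped'' step, one argues analogously, using that the gap-$(g+1)$ process has a strictly looser cap and hence is even less likely to be in its no-update branch.

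The main obstacle I anticipate is handling the interaction between the two processes' ``no update''/``capped update'' conditions cleanly, since these conditions depend on the global gap $G$, not just on local counter values — so $\mathbf{a}(t) \leq \mathbf{b}(t)$ alone does not immediately control $G^{\text{lb},g}(t)$ versus $G^{\text{lb},g+1}(t)$. The key auxiliary fact I would establish (or extract from the induction) is that along the way $G^{\text{lb},g}(t) \leq g$ and $G^{\text{lb},g+1}(t) \leq g+1$ always hold (these are invariants of the algorithms by construction), and that whenever the gap-$g$ process is at gap $g$ with the minimizer of $\bm{h}(r_{t+1})$ equal to its max (its special branch), the pointwise domination plus $a_u(t) = b_u(t)$ pins down enough of the structure of $\mathbf{b}(t)$ to conclude $u$ is still incremented. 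Once~\eqref{e:lem-1} is verified in all cases, Lemma~\ref{lem:RecursionDominationCounting} immediately yields $\LBCBFCU{g}{t} \leq \LBCBFCU{g+1}{t}$ for all $t$, which is the claim; the non-increasing statement for $(\UBCBFCU{g}{t})_g$ follows by a symmetric argument with the roles of the two bounding algorithms reversed.
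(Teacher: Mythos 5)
Your overall strategy is exactly the paper's: set $\mathbf{a}(t)=\LBCBFCU{g}{t}$, $\mathbf{b}(t)=\LBCBFCU{g+1}{t}$ and invoke Lemma~\ref{lem:RecursionDominationCounting}, reducing everything to the implication~\eqref{e:lem-1}; your handling of the plain conservative-update step (showing $u\in\argmin_{j\in\bm{h}(r_{t+1})}\LBCBFCUElement{g+1}{j}{t}$ from $a_u(t)=b_u(t)$ and $\mathbf{a}(t)\leq\mathbf{b}(t)$) matches the paper's~\eqref{e:lemLBIncre1}. However, the one step where the argument is actually delicate --- ruling out that the gap-$(g+1)$ process is sitting in its no-update branch at that moment --- is left as ``a short argument shows this cannot happen,'' and the shortcut you then propose does not work: knowing that $b_u(t)$ is at or below $\min_{j\in\bm{h}(r_{t+1})}b_j(t)$ does \emph{not} rule out the blocking configuration, since that configuration is precisely the one where this minimum equals $\LBCBFCUElement{g+1}{\max}{t}$ and $u$ may well sit at that value. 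The missing chain is the paper's case (iii): if the block were active, then $b_u(t)=\LBCBFCUElement{g+1}{\max}{t}$, whence $a_u(t)=b_u(t)\geq \LBCBFCUElement{g}{\max}{t}\geq a_u(t)$ forces $\LBCBFCUElement{g}{\max}{t}=\LBCBFCUElement{g+1}{\max}{t}$ and shows all selected counters in $\mathbf{a}(t)$ are at the gap-$g$ maximum; since the gap-$g$ process nonetheless performed an update, its gap must satisfy $\LBGap{g}(t)<g$, and then $\LBCBFCUElement{g+1}{\min}{t}\geq\LBCBFCUElement{g}{\min}{t}$ gives $\LBGap{g+1}(t)\leq\LBGap{g}(t)<g<g+1$, contradicting the blocking condition. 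Without this you have not verified~\eqref{e:lem-1}.

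A secondary issue: you repeatedly refer to a ``min-update'' or ``capped-step increment'' branch of $\LBCU$ (e.g.\ ``the branch where $\LBCU$ incremented $u$ because $u$ was a min-value counter during a capped step''). No such branch exists: Algorithm~\ref{alg:LB-CBF-CU} either performs no update at all (line~\ref{line:LBNoUpdate}) or a plain conservative update (line~\ref{line:LBCuUpdate}); the min-value increment belongs to $\UBCU$ (line~\ref{line:UbMinUpdate}). This conflation is harmless here only because the extra case is vacuous, but it signals that the symmetric argument you claim for $(\UBCBFCU{g}{t})_g$ would also need care --- the paper's proof of Lemma~\ref{lem:UBDecreasing} is genuinely different in structure (a proof by contradiction on whether $u$ is a minimum-value counter), not a mirror image of the lower-bound case.
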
 

\begin{lemma}\label{lem:UBDecreasing}
$\UBCBFCU{g+1}{T} \leq \UBCBFCU{g}{T}$. 
\end{lemma}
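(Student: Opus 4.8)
The plan is to obtain both inequalities from the elementary comparison principle of Lemma~\ref{lem:RecursionDominationCounting}, applied in two mirror-image ways. For Lemma~\ref{lem:LBIncreasing} I would take $\mathbf{a}(t)=\LBCBFCU{g}{t}$ and $\mathbf{b}(t)=\LBCBFCU{g+1}{t}$; for Lemma~\ref{lem:UBDecreasing} I would take $\mathbf{a}(t)=\UBCBFCU{g+1}{t}$ and $\mathbf{b}(t)=\UBCBFCU{g}{t}$. In both cases the ``cheap'' hypotheses of Lemma~\ref{lem:RecursionDominationCounting} are immediate from the algorithms: each counter is non-decreasing and changes by at most $1$ per step (for $\UBCU$ this uses that line~\ref{line:UbMinUpdate} resets a counter to $\UBCBFCUElement{g}{c}{t-1}+1$, so a counter touched by both line~\ref{line:UbCuUpdate} and line~\ref{line:UbMinUpdate} still rises by only $1$), and both sequences start at $\mathbf{0}$. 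I would also record the structural invariant that $\LBCU$ (resp.\ $\UBCU$) run with parameter $g$ keeps the counters gap at most $g$ at every step: a conservative update never raises the maximum unless all selected counters already equal the maximum, and in that case either nothing happens ($\LBCU$), or the minimal counters are raised together with the selected ones ($\UBCU$), so the gap stays constant.

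The heart of the argument is the implication~\eqref{e:lem-1}. For the lower-bound pair, assume $a_j(t)\le b_j(t)$ for all $j$, $a_u(t)=b_u(t)$, and that counter $u$ is incremented in $\LBCU^g$ at step $t+1$. Then $u\in\bm{h}(r_{t+1})$ and $u$ attains $\min\{a_w(t):w\in\bm{h}(r_{t+1})\}$; combined with $a_u(t)=b_u(t)$ and $\mathbf{a}(t)\le\mathbf{b}(t)$ this forces $b_u(t)=\min\{b_w(t):w\in\bm{h}(r_{t+1})\}$, so the conservative update of line~\ref{line:LBCuUpdate} would increment $u$ in $\LBCU^{g+1}$ unless the no-update branch (line~\ref{line:LBNoUpdate}) is taken, i.e.\ $G^{\mathrm{lb},g+1}(t)=g+1$ and every selected counter equals the maximum of $\mathbf{b}(t)$. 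I would rule that out: if every selected counter equals $b_{\max}(t)$, then since $u$ is a minimizer of the selected counters in $\mathbf{a}(t)$ and $\mathbf{a}(t)\le\mathbf{b}(t)$ with equality at $u$, every selected counter also equals $a_{\max}(t)=b_{\max}(t)$ in $\mathbf{a}(t)$; as $u$ was nonetheless incremented in $\LBCU^g$ we could not have been in its no-update branch, hence $G^{\mathrm{lb},g}(t)\le g-1$ and $a_{\min}(t)\ge a_{\max}(t)-(g-1)=b_{\max}(t)-g+1$. But the assumed branch gives $b_{\min}(t)=b_{\max}(t)-(g+1)$, and $a_{\min}(t)\le b_{\min}(t)$ then forces $b_{\max}(t)-g+1\le b_{\max}(t)-g-1$, a contradiction. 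So the conservative update fires in $\LBCU^{g+1}$ and increments $u$, proving~\eqref{e:lem-1}; Lemma~\ref{lem:RecursionDominationCounting} then yields $\LBCBFCU{g}{T}\le\LBCBFCU{g+1}{T}$.

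For the upper-bound pair I would split according to how $u$ gets incremented in $\UBCU^{g+1}$ at step $t+1$. If it is incremented by the conservative update of line~\ref{line:UbCuUpdate}, then exactly as above $u$ is a minimizer of the selected counters in both $\mathbf{a}(t)$ and $\mathbf{b}(t)$, and since line~\ref{line:UbCuUpdate} runs unconditionally, $u$ is incremented in $\UBCU^g$ as well (line~\ref{line:UbMinUpdate} can only increase counters further). If instead $u$ is incremented \emph{only} by line~\ref{line:UbMinUpdate}, then the special case holds for $\UBCU^{g+1}$ at step $t$ ($G^{\mathrm{ub},g+1}(t)=g+1$, every selected counter equals $a_{\max}(t)$) and $a_u(t)=a_{\min}(t)$; but then for any selected $w$ we get $b_w(t)\ge a_w(t)=a_{\max}(t)=a_{\min}(t)+g+1$ while $b_u(t)=a_u(t)=a_{\min}(t)$, so the gap of $\mathbf{b}(t)=\UBCU^g$ would be at least $g+1$, contradicting the invariant $G^{\mathrm{ub},g}(t)\le g$. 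Hence this second case cannot occur under the hypotheses of~\eqref{e:lem-1}, the implication holds, and Lemma~\ref{lem:RecursionDominationCounting} gives $\UBCBFCU{g+1}{T}\le\UBCBFCU{g}{T}$. Chaining over $g$ then establishes Lemma~\ref{lem:monotony}.

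I expect the main obstacle to be the bookkeeping around the two ``special'' branches: one must check that the property ``all selected counters equal the maximum'' transfers between the two coupled sketches (it does, precisely because $u$ is simultaneously a selected minimizer in one sketch and tied in value across the two), and that the gap invariant of the \emph{dominated} sketch is exactly what supplies the contradiction in each case. The rest is routine verification of the hypotheses of Lemma~\ref{lem:RecursionDominationCounting}.
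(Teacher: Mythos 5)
Your proof of Lemma~\ref{lem:UBDecreasing} is correct and follows essentially the same route as the paper: apply Lemma~\ref{lem:RecursionDominationCounting} with $\mathbf{a}=\UBCBFCU{g+1}{\cdot}$, $\mathbf{b}=\UBCBFCU{g}{\cdot}$, handle the conservative-update increment by transferring the minimizer property, and rule out an increment coming solely from line~\ref{line:UbMinUpdate} by contradiction with the gap invariant $G^{\text{ub},g}(t)\le g$ of the dominating sketch. The only cosmetic difference is that you exhibit the contradiction via a specific selected counter $w$ rather than via $Y^{\text{ub},g}_{\max}(t)\ge Y^{\text{ub},g+1}_{\max}(t)$, which is the same estimate.
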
 


Notice that the gap between the maximum and minimum counter values in both LB-CU and UB-CU with the same parameter~$g$ is at most~$g$. This observation can be easily shown by recursion and will help prove Lemmas~\ref{lem:LBIncreasing} and~\ref{lem:UBDecreasing}. 

\begin{proof}[\textbf{Proof of Lemma~\ref{lem:LBIncreasing}}]
    The proof relies on Lemma~\ref{lem:RecursionDominationCounting}. Consequently, for any~$u\in \aIntInterval{m}$,  we first prove that,

    \begin{align}
        \begin{cases}
            \text{(a)}& \LBCBFCU{g}{t}\leq \LBCBFCU{g+1}{t}, \; \LBCBFCUElement{g}{u}{t}= \LBCBFCUElement{g+1}{u}{t}, \\
            \text{(b)}& \LBCBFCUElement{g}{u}{t+1}= \LBCBFCUElement{g}{u}{t}+1,
        \end{cases}
        \implies \LBCBFCUElement{g+1}{u}{t+1}= \LBCBFCUElement{g+1}{u}{t}+1. 
    \end{align}

\noindent We assume (a) and (b). Given (b), it follows that~$u\in\argmin_{j\in \bm{h}(r_t)} \LBCBFCUElement{g}{j}{t}$. Utilizing (a), we can write
    \begin{align}\label{e:lemLBIncre1}
\LBCBFCUElement{g+1}{u}{t}= \LBCBFCUElement{g}{u}{t} = \min_{j\in \bm{h}(r_t)} \LBCBFCUElement{g}{j}{t} \leq \min_{j\in \bm{h}(r_t)} \LBCBFCUElement{g+1}{j}{t} \implies u\in \argmin_{j\in \bm{h}(r_t)} \LBCBFCUElement{g+1}{j}{t}.
    \end{align}
We distinguish three cases. 

\noindent \textbf{(i)} $\LBCBFCUElement{g}{u}{t}< \LBCBFCUElement{g}{\max}{t}$. Utilizing (a) again, we get
\begin{align}\label{e:lemLBIncre2}
    \LBCBFCUElement{g+1}{u}{t}= \LBCBFCUElement{g}{u}{t}< \LBCBFCUElement{g}{\max}{t} \leq \LBCBFCUElement{g+1}{\max}{t} \implies  \LBCBFCUElement{g+1}{u}{t} < \LBCBFCUElement{g+1}{\max}{t}. 
\end{align}
Combining~\eqref{e:lemLBIncre1} and~\eqref{e:lemLBIncre2}, we deduce that~$\LBCBFCUElement{g+1}{u}{t+1}= \LBCBFCUElement{g+1}{u}{t}+1$.

\noindent \textbf{(ii)} If $\LBCBFCUElement{g}{u}{t}= \LBCBFCUElement{g}{\max}{t}$ and $\LBCBFCUElement{g+1}{u}{t}< \LBCBFCUElement{g+1}{\max}{t}$, from~\eqref{e:lemLBIncre1} we deduce that~$\LBCBFCUElement{g+1}{u}{t+1}= \LBCBFCUElement{g+1}{u}{t}+1$.

\noindent \textbf{(iii)} If $\LBCBFCUElement{g}{u}{t}= \LBCBFCUElement{g}{\max}{t}$ and~$\LBCBFCUElement{g+1}{u}{t}= \LBCBFCUElement{g+1}{\max}{t}$, from (b) we infer that~$\LBCBFCUElement{g}{\text{max}}{t} - \LBCBFCUElement{g}{\text{max}}{t} < g$, hence (a) allows us to write

\begin{align}
\LBCBFCUElement{g+1}{\max}{t}- \LBCBFCUElement{g+1}{\min}{t}\leq \LBCBFCUElement{g}{\max}{t} -\LBCBFCUElement{g}{\min}{t} <g \implies \LBCBFCUElement{g+1}{\max}{t}- \LBCBFCUElement{g+1}{\min}{t} < g+1,
\end{align}
thus~\eqref{e:lemLBIncre1} implies that $\LBCBFCUElement{g+1}{u}{t+1}= \LBCBFCUElement{g+1}{u}{t}+1$. We can now conclude the proof by applying Lemma~\ref{lem:RecursionDominationCounting}.

\end{proof}

\begin{proof}[\textbf{Proof of Lemma~\ref{lem:UBDecreasing}}]
      The proof relies on Lemma~\ref{lem:RecursionDominationCounting}. Consequently, for any~$u\in \aIntInterval{m}$,  we first prove that,
    \begin{align}
        \begin{cases}
            \text{(a)}& \UBCBFCU{g+1}{t}\leq \UBCBFCU{g}{t}, \; \UBCBFCUElement{g+1}{u}{t}= \UBCBFCUElement{g}{u}{t}, \\
            \text{(b)}& \UBCBFCUElement{g+1}{u}{t+1}= \UBCBFCUElement{g+1}{u}{t}+1,
        \end{cases}
        \implies \UBCBFCUElement{g}{u}{t+1}= \UBCBFCUElement{g}{u}{t}+1. 
    \end{align}
    We assume (a) and (b). From (b), we deduce that there are two cases.

\noindent \textbf{(i)} $u\in \argmin_{j\in \bm{h}(r_t)} \UBCBFCUElement{g+1}{u}{t}$, and following the same reasoning as in~\eqref{e:lemLBIncre1} from Lemma~\ref{lem:LBIncreasing}, we deduce that~$u\in \argmin_{j\in \bm{h}(r_t)} \UBCBFCUElement{g}{j}{t}$ and therefore~$\UBCBFCUElement{g}{u}{t+1}= \UBCBFCUElement{g}{u}{t}+1$. 

\noindent \textbf{(ii)} We employ a proof by contradiction to establish that~$\UBCBFCUElement{g+1}{u}{t}>\UBCBFCUElement{g+1}{\min}{t}$. Suppose~$\UBCBFCUElement{g+1}{u}{t}=\UBCBFCUElement{g+1}{\min}{t}$, then from (b), we have $\UBCBFCUElement{g+1}{\text{max}}{t} - \UBCBFCUElement{g+1}{u}{t}= g+1$. Furthermore, (a) implies $\UBCBFCUElement{g}{\text{max}}{t} \geq \UBCBFCUElement{g+1}{\text{max}}{t}$ and $\UBCBFCUElement{g+1}{u}{t} =\UBCBFCUElement{g}{u}{t}$. Thus,

\begin{align}
\UBCBFCUElement{g}{\text{max}}{t} - \UBCBFCUElement{g}{u}{t} \geq \UBCBFCUElement{g+1}{\text{max}}{t} - \UBCBFCUElement{g+1}{u}{t} = g+1.
\end{align}
This leads to a contradiction since it is always true that~$\UBCBFCUElement{g}{\text{max}}{t} - \UBCBFCUElement{g}{u}{t}\leq g$. We conclude the proof by applying Lemma~\ref{lem:RecursionDominationCounting}.
    
\end{proof}

\section{Proof of Properties~\ref{eq:complexity} and~\ref{eq:sandwich_proba} in Th.~\ref{th:main}}
\label{proof:dynamicsExcessLBUBprocesses}

\paragraph*{State space.}  By definition, $\LBExcessEle{g}{0}(t)\geq 1$ and $\sum_{l=0}^{+\infty} \LBExcessEle{g}{l}(t)= m$ at any step~$t$. Moreover, any counter~$j$ attaining the maximum value at step~$t$, i.e., $\LBCBFCUElement{g}{j}{t}=\LBCBFCUElement{g}{\max}{t}$, is only incremented if it is selected along with counters of the same value. Therefore, $\LBExcessEle{g}{\LBGap{g}(t)}(t)\geq d$, where~$\LBGap{g}(t)$ denotes the gap between the maximum and minimum counter values, namely,~$\LBGap{g}(t)=\LBCBFCUElement{g}{\max}{t} - \LBCBFCUElement{g}{\min}{t}$. Furthermore, this gap is always smaller than~$g$. It is then easy to show that the state space of~$(\LBExcess{g}(t))_t$, denoted~$\Omega_g$, is given by, 
 \begin{align}\label{e:DefOmegag}
          \Omega_g = \left\{ \mathbf{k} = (k_f)_{f\in \mathbb{N}}\in \mathbb{N}^{\mathbb{N}}: \;  \sum_{f=0}^{+\infty} k_f = M ,\; k_{0}\geq 1, \; k_{L(\mathbf{k})} \geq d, \;  L(\mathbf{k}) \leq g  \right\}, 
\end{align}
where~$L(\mathbf{k}) = \max \left\{ f \in \mathbb{N}: \; k_f > 0 \right\}$. Notice that $\LBGap{g}(t) = L(\LBExcess{g}(t))$. Using a \textit{stars and bars} argument~\cite[p.~38]{feller1968introduction}, we deduce that the cardinality of $\Omega_g$ is equal to~$\binom{m+g-d}{g}$.

To further justify the cardinality of $\Omega_g$, introduce $\Omega_{g,l}$ as the subset of~$\Omega_g$ where for any~$\bm{k}\in \Omega_{g,l}$, $L(\bm{k})=l$. Now, we define $\Omega_{g,l}$ as the subset of~$\Omega_g$ such that for any~$\bm{k}\in \Omega_{g,l}$, $L(\bm{k})=l$. Consider a sequence of integers $i_1 < i_2 < \ldots < i_l$, where each $i_j$ belongs to the interval $\abIntInterval{1}{m+l-d-1}$. Let $k_0 = i_1$, and for each $v \in \abIntInterval{1}{l-1}$, define $k_v = i_{v+1} - i_v - 1$, and $k_l = m+l - 1 - i_l$. This mapping establishes a one-to-one correspondence between the elements of $\Omega_{g,l}$ and the possible sequences $i_1 < i_2 < \ldots < i_l$. Consequently, the cardinality of $\Omega_{g,l}$ is $\binom{m+l-d-1}{l}$. As a result, the cardinality of $\Omega_g$ can be obtained by summing over all possible values of $l$, ranging from $0$ to $g$, giving $\sum_{l=0}^{g} \binom{m+l-d-1}{l}$. This sum is equal to~$\binom{m+g-d}{g}$ (as shown in~\cite[Sec. 5.1, p.159]{ConcreteMathKnuth1994}).

\paragraph*{One-step transition probabilities.} Let~$\LBVMin{g}(t)$ be the difference between the minimum value counters in~$\bm{h}(r_t)$ and~$\LBCBFCUElement{g}{\min}{t}$, and~$\LBNumberVMin{g}(t)$~represents the number of counters in~$\bm{h}(r_t)$ with a value equal to~$\LBVMin{g}(t)$. Formally, 
    \begin{align}
       & \LBVMin{g}(t) \triangleq \min \left\{ \LBCBFCUElement{g}{j}{t} - \LBCBFCUElement{g}{\text{min}}{t}:\;  j\in \bm{h}(r_t)     \right\}, \\  
       & \LBNumberVMin{g}(t) \triangleq  \left |  \left\{ j\in \bm{h}(r_t) :\;  \LBCBFCUElement{g}{j}{t} - \LBCBFCUElement{g}{\text{min}}{t} =\LBVMin{g}(t)   \right\} \right|. 
    \end{align}
Observe that for any $\mathbf{k}\in \Omega_g$, any $v$ satisfying $k_v\geq 1$, and any $c\in \abIntInterval{1}{\min(d,k_v)}$, if $(\LBExcess{g}(t),\LBVMin{g}(t), \LBNumberVMin{g}(t))$ equals $(\mathbf{k}, v ,c)$, then there exists a unique vector~$\mathbf{k^{'}}\in \Omega_g$ such that $\LBExcess{g}(t+1)=\mathbf{k^{'}}$, denoted as $\LBGamma(\mathbf{k}, v, c)$. Therefore, the number of non-zero entries in~$\LBMatrixTrans{g}$ in the row corresponding to~$\mathbf{k}$ is equal to the number of possible values of~$(\LBVMin{g}(t),\LBNumberVMin{g}(t))$ given that $\LBExcess{g}(t)=\mathbf{k}$. This count is, in turn, equal to~$\sum_{v: k_v\geq 1} \min(d,k_v)$, which is at most~$m$, proving that each row of~$\LBMatrixTrans{g}$ has at most~$m$ non zero-entries. Thanks to Assumption~\ref{assum:UniformCounters}, one could compute the transition probabilities as follows,  
\begin{align}\label{e:probaTransLB}
p^{\text{lb},g}_{\mathbf{k}, \LBGamma(\mathbf{k}, v, c)} = \ProbaS{(\LBVMin{g}(t), \LBNumberVMin{g}(t)) = (v,c)}{\LBExcess{g}(t)=\mathbf{k}} = \frac{\binom{k_v}{c} \binom{\sum_{l>v} k_l}{d-c}}{\binom{m}{d}}.
\end{align}
To determine $\LBGamma(\mathbf{k}, v, c)$, we distinguish three cases based on the dynamics of $(\LBCBFCU{g}{t})_t$ (see Alg.~\ref{alg:LB-CBF-CU}). 
\begin{enumerate}
    \item $\LBVMin{g}(t)=g$ and $\LBNumberVMin{g}(t)=d$, no updates are made, and therefore $\LBGamma(\mathbf{k}, g, d)= \mathbf{k}$.

    \item  $\LBVMin{g}(t)=0$ and $\LBNumberVMin{g}(t)=\LBExcessEle{g}{0}(t)$ which implies that~$\LBCBFCUElement{g}{\min}{t+1}=\LBCBFCUElement{g}{\min}{t}+1$, and thus~$\LBExcessEle{g}{0}(t+1)= \LBExcessEle{g}{0}(t)+\LBExcessEle{g}{1}(t)$, and for any $l\geq 1$, $\LBExcessEle{g}{l}(t+1) = \LBExcessEle{g}{l+1}(t)$. It follows that whenever $k_0\leq d$,
    \begin{align}\label{e:LBTransCUVZero}
     \LBGammaEle(\mathbf{k},0,k_0) = \begin{cases}
            k_0 + k_1, &\text{ if } l=0, \\
            k_{l+1},  & \text{ otherwise.} 
            \end{cases}
    \end{align}
    
    \item Otherwise~$\LBCBFCUElement{g}{\min}{t+1}=\LBCBFCUElement{g}{\min}{t}$, and therefore, $\LBExcessEle{g}{\LBVMin{g}(t)}(t+1)= \LBExcessEle{g}{\LBVMin{g}(t)}(t) - \LBNumberVMin{g}(t)$ and~$\LBExcessEle{g}{\LBVMin{g}(t)+1}(t+1)= \LBExcessEle{g}{\LBVMin{g}(t)+1}(t) + \LBNumberVMin{g}(t)$. Hence, for any values of $(v,c)\notin \{(g,d), (0,k_0)\}$, 
    \begin{align}\label{e:LBSTransCU}
          \LBGammaEle(\mathbf{k},v,c)  \triangleq \begin{cases}
            k_l - c, & \text{ if }  l= v,   \\ 
            k_l + c, & \text{ if }  l = v + 1, \\ 
            k_l,  & \text{ otherwise.}
            \end{cases}
    \end{align}

\end{enumerate}

\paragraph*{Limiting distribution} 

The Markov process $(\LBExcess{g}(t))_t$ has a limiting distribution. This is evident due to the existence of a non-zero probability path between any two states $\bm{k}$ and $\bm{k^{'}}$ in $\Omega_g$. Furthermore, starting from the initial state where $k_0=M$ and $k_l=0$ for any $l>0$, it's feasible to return within $m-d+2$ and $m-d+1$ steps, ensuring the chain's aperiodicity. Given the finiteness of $\Omega_g$, we conclude that $(\LBExcess{g}(t))_t$ is \textit{ergodic}, thus possessing a unique limiting distribution irrespective of the initial state.

\paragraph*{Conditional expectation of the error.}  The objective is to compute~$\LBProbaError{g}{k}{k'}$, the probability that $\delta_{t}^{\text{lb},g}=1$ given that $(\LBExcess{g}(t),\LBExcess{g}(t-1)) = (\mathbf{k^{'}},\mathbf{k})$. The quantity~$\LBProbaError{g}{k}{k'}$ is only defined when the transition probability from $\mathbf{k}$ to $\mathbf{k^{'}}$ is strictly positive. Therefore, we assume that~$\mathbf{k^{'}}= \LBGamma(\mathbf{k},v, c)$ such that $k_v>0$ and $c\leq \min(k_v,d)$ (see~\eqref{e:probaTransLB}). The conditional probability that $\delta_{t}^{\text{lb},g}=1$ is equal to the probability that the error for an item~$j$, such that~$n_{j}(t)=0$, increases by~$1$ at step~$t$.

When~$(v,c)=(g,d)$, $\LBCBFCU{g}{t+1}=\LBCBFCU{g}{t}$, and $\mathbf{k^{'}}=\mathbf{k}$ and therefore the error for $j$ does not increase at step $t$, i.e., $\LBProbaError{g}{k}{k}=0$. Otherwise, the error can potentially increase, depending on the value of~$\mathbf{h}(j)=\{h_l(j): \; l\in \aIntInterval{d} \}$. Let $u_1,\ldots, u_c$ be the $c$ counters incremented at step~$t$,  the error for item~$j$ increases by~$1$ at step~$t$ if and only if $\mathbf{h}(j)$ belongs to the set 
\begin{align}
       \left\{ S\in \binom{\aIntInterval{m}}{d}: \; \argmin_{u\in S} \LBCBFCUElement{g}{u}{t} \subset \{u_1, \ldots, u_c \} \right\}, 
\end{align}
which can be written as, 

    \begin{align}\nonumber
        &\left\{ S\in \binom{\aIntInterval{m}}{d}: \; S\subset \{u_1, \ldots, u_c\}\cup \{ u\in \aIntInterval{m}: \;\LBCBFCUElement{g}{u}{t} - \LBCBFCUElement{g}{\min}{t} >  v  \} \} \right\} \setminus \\
        &\left\{ S\in \binom{\aIntInterval{m}}{d}: \; S \subset \{ u\in \aIntInterval{m}: \;\LBCBFCUElement{g}{u}{t} - \LBCBFCUElement{g}{\min}{t} >  v  \}  \right\}.
    \end{align}
Under Assumption~\ref{assum:IdealHash}, $\mathbf{h}(j)$ is a uniform random variable over~$\binom{\aIntInterval{m}}{d}$, and therefore the conditional probabilityof the error can be computed as, 
\begin{align}\label{e:UBCondProbaError}
 &\beta_{\mathbf{k},\LBGamma(\mathbf{k},v,c)}^{\text{lb},g} = \frac{\binom{\sum_{w>v} k_w + c}{d} - \binom{\sum_{w>v} k_w}{d}} {\binom{m}{d}},\; \forall (v,c)\neq (g,d).
\end{align}







\paragraph*{The error in UB-CU.} 
The state space of~$(\UBExcess{g}(t))_t$ is identical to that of~$(\LBExcess{g}(t))_t$, namely $\Omega_g$ defined in~\eqref{e:DefOmegag}. 
The transition from~$\UBExcess{g}(t+1)$ to~$\UBExcess{g}(t)$ is deterministic given~$\UBVMin{g}(t)$ and~$\UBNumberVMin{g}(t)$. More specifically, $\UBExcess{g}(t+1)=\UBGamma(\UBExcess{g}(t),\UBVMin{g}(t),\UBNumberVMin{g}(t))$. When $(\UBVMin{g}(t),\UBNumberVMin{g}(t))\neq (g,d)$, the dynamics of~$\LBExcess{g}(t)$ and~$\UBExcess{g}(t)$ are identical, namely, 
    \begin{align}\label{e:UBTransitionFunction1}
             \UBGamma(\mathbf{k},v,c) = \LBGamma(\mathbf{k},v,c), \forall (v,c)\neq (g,d). 
    \end{align}
On the other hand, if $(\UBVMin{g}(t),\UBNumberVMin{g}(t))\neq (g,d)$, counters attaining the minimum value $\UBCBFCUElement{g}{\min}{t}$ and the selected counters are incremented in $\UBCBFCU{g}{t}$ (see Alg.~\ref{alg:UB-CBF-CU}). In this case, we have that, 
    \begin{align}\label{e:UBTransitionFunction2}
            \forall g> 1, \; &\UBGammaEle(\mathbf{k},g,d)=  \begin{cases}
            d, & \text{ if }  l= g,  \\ 
            k_g - d, & \text{ if }  l = g-1,\\ 
            k_0 + k_1, & \text{ if } l = 0, \\ 
            k_{l+1}, & \text{ otherwise.}
    \end{cases} , \\
      &\UBGammaEle(\mathbf{k},1,d)= \begin{cases}
             m-d, &\text{ if } l=0, \\
             d, &\text{ otherwise.}
    \end{cases}
    \end{align}

The transition probability from $\mathbf{k}$ to $\UBGamma(\mathbf{k}, v, c)$ in~$(\UBExcess{g}(t))_t$ is equal to the transition probability from~$\mathbf{k}$ to~$\LBGamma(\mathbf{k}, v, c)$ in $(\LBExcess{g}(t))_t$ given in~\eqref{e:probaTransLB}. Formally, 
\begin{align}
p^{\text{ub},g}_{\mathbf{k}, \UBGamma(\mathbf{k}, v, c)}= p^{\text{lb},g}_{\mathbf{k}, \LBGamma(\mathbf{k}, v, c)},\;  \forall (\mathbf{k},v,c).
\end{align}

The quantities~$\beta_{\mathbf{k},\UBGamma(\mathbf{k},v,c)}^{\text{ub},g}$ and $\beta_{\mathbf{k},\LBGamma(\mathbf{k},v,c)}^{\text{lb},g}$ are equal whenever~$v<g$ thanks to~\eqref{e:UBTransitionFunction1}. We can then use the expression in~\eqref{e:UBCondProbaError} to compute $\beta_{\mathbf{k},\UBGamma(\mathbf{k},v,c)}^{\text{ub},g}$. However, when~$v=g$, all the $k_0$ counters equal to~$\UBCBFCUElement{g}{\min}{t}$ are incremented along with~$d$ counters equal to~$\UBCBFCUElement{g}{\max}{t}$, denoted $u_1,\ldots, u_d$. Therefore we distinguish two cases where the error for an item~$j$, with $n_j(t)=0$, increases by~$1$. The first case is when $\mathbf{h}(i)=\{u_1,\ldots, u_d\}$. The second case is when at least one of the $k_0$ minimal value counters belongs to $\mathbf{h}(i)$. We deduce that, 
\begin{align}
 &  \beta_{\mathbf{k},\UBGamma(\mathbf{k},g,d)}^{\text{ub},g} =  \frac{1+ \binom{m}{d}-\binom{m-k_0}{d}}{\binom{m}{d}}.
\end{align}




\end{document}